\newtheorem{observation}[theorem]{Observation}
\newcommand{\commentout}[1]{}
\begin{document}

\title{Seasonal Goods and Spoiled Milk:\\Pricing for a Limited Shelf-Life} 



\author{Atiyeh Ashari Ghomi}
\affiliation{%
 \institution{University of Toronto}
 \city{Toronto} 
 \state{Canada} 
}
\email{atiyeh@cs.toronto.edu}

\author{Allan Borodin}
\affiliation{%
 \institution{University of Toronto}
 \city{Toronto} 
 \state{Canada} 
}
\email{bor@cs.toronto.edu}

\author{Omer Lev}
\affiliation{%
 \institution{Ben-Gurion University}
 \city{Beersheba} 
 \state{Israel} 
}
\email{omerlev@bgu.ac.il}

\begin{abstract} 
We consider a ``price-committment model'' where a single seller announces 
prices for some extended period of time. More specifically, we examine the case of items with a limited shelf-life where storing an item (before consumption) may carry a cost to a buyer (or distributor). For example, eggs, milk, or Groupon coupons have a fixed expiry date, and seasonal goods can suffer a decrease in value. We show how this setting contrasts with recent results by Berbeglia et al \cite{berbeglia:storable} for items with infinite shelf-life.

We prove tight bounds on the seller's profits showing how they relate to the items' shelf-life. We show, counterintuitively, that in our limited shelf-life setting, increasing storage costs can sometimes lead to less profit for the seller which cannot happen when items have unlimited shelf-life. We also provide an algorithm that calculates optimal prices. 
Finally, we examine empirically the relationship between profits and buyer utility as the storage cost and shelf-life duration change, and observe properties, some of which are unique to the limited shelf-life setting.
\end{abstract}

%

\keywords{pricing; Stackelberg game; indivisible storable goods; limited storage} 

\maketitle


\section{Introduction}

The problem of how to allocate resources to different people (or agents) when each of them has a different valuation for a given resource, is one of the most fundamental and well-studied problems in micro-economics. The most common solution has been to set anonymous prices (i.e. identical pricing for every agent) and then agents who value the item above its price buy it, and otherwise they do not.

In the simple multi-unit one-shot scenario setting (i.e. in which buyers with a known valuation for the item make their purchasing decision and leave), finding the optimal price (and hence, the optimal allocation) is a relatively simple optimization problem. However, adding even a small amount of complexity to the scenario makes it significantly harder to solve. Such complications include adding uncertainty about buyers' valuations \cite{Mye81}, multiple vendors \cite{BNL14,BLS16}, and multiple items \cite{MV07,LOBR15} (all with various limitations on the agents' valuation functions). All these problems have spawned intense research to explore their respective areas\footnote{From here on, we use a buyer/seller terminology as it is easier to grasp. However, this applies to many resource allocation problems.}.

Another such issue that leads to an additional complication of the basic problem is adding a temporal dimension to the setting. This means sellers can change their prices over time, and hence buyers can choose to change their buying decisions from day to day, and should they find it worthwhile, to store items over time (introducing the issue of storage cost). Of course, if buyers' valuations remain constant over time, and they wish to buy every day, prices will also remain the same for every day. So the interesting problem involves buyers whose valuation for items change over time. Naturally, the seller wishes to find prices which maximize its profit, while the buyers seek to maximize their own utility (i.e. value at time of consumption $-$ price at time of purchase $-$ storage cost, for each item purchased). This is, fundamentally, a \emph{Stackelberg game}, in which the seller is the ``leader'' setting the prices, while the buyers are the ``followers'' reacting to current and future prices, by pursuing a best response strategy. We examine the outcomes of these games, which are basically the subgame perfect Nash equilibria of the games.

We wish to understand optimal seller pricing (as a function of storage cost and shelf-life duration) and how it impacts the overall utility of the buyers. While there have been several attempts to construct such a model (see Section~\ref{relatedWork}), only recently did Berbeglia et al.~\cite{berbeglia:storable} suggest a model for indivisible items sold over discrete time steps, with buyers being able to store items at a given time for consumption at some later time. The Berbeglia et al.~\cite{berbeglia:storable} analysis is greatly assisted by their insightful result showing that there are seller optimal prices such that buyers will not store items.

We introduce a seemingly small but very natural extension to this model. Instead of discussing items with an unlimited shelf life, we discuss items with a limited consumption date. These can be {\bf perishable} items, like milk, eggs, or fruit, which lose their value after several days, and are no longer fit for consumption. Perishable items aren't only food items; Groupon coupons, for example, also have an expiry date and Amazon Web Services (AWS) server rental periods are another case. A similar family of items that we will discuss are {\bf degradable} items, which diminish their value after several days (though still maintaining some value). Such products can be seasonal or fashion dependent, like clothing items, which significantly lose value once out of season or fashion.

Changing the durability of products may seem small, but it changes the results significantly. The various variables involved in setting prices are effected in a much more direct manner. For example, Berbeglia et al.'s~\cite{berbeglia:storable} simplifying result that buyers will never be required to use storage under optimal pricing no longer holds, and therefore much of their analysis is no longer applicable in our setting. This requires us to explore more directly the effect of storage cost on prices and the resulting impact on buyer decisions. Changing the duration of items allows further examination of the inter-connection between prices and purchasing demand, and how small changes in storage cost or shelf life can cascade into unexpected changes in buyer utility and profits. Sometimes these move in tandem (e.g when a seller lowers the price thereby selling more items to more people, increasing profit and the overall utility of buyers), while in other problem instances this is not necessarily the case.

In this work we examine the issues of profit and buyers' utilities as a function of storage cost and shelf-life both theoretically and empirically. In Section \ref{sec:shelf-life} we show a precise relationship between the shelf-life of an item and the seller's profit, both for perishable and degradable items. In Section \ref{sec:finding-prices} we provide an algorithm for setting prices optimally. Finally, in Section \ref{sec:experiments}, we examine price and social welfare empirically using simulations (with respect to realistic distributions on buyer valuations). We show how a limited shelf-life significantly changes previous results (in Berbeglia et al.~\cite{berbeglia:storable}) on the relation between storage costs and profits, including counterintuitively, that in contrast to infinite shelf life, increasing the cost of storage does not necessarily induce consumers to accept higher prices, and can even reduce the profit of the seller.

\section{Related Work}\label{relatedWork}

While the topic of limited shelf-life has been examined in different settings, for example in \cite{WL12,DHL06}, it has generally treated time and products as completely divisible. In contrast, we examine these issues for discrete time and indivisible goods, as is common for most consumer items. Some work on pricing over time involves agents looking for the cheapest time to buy a single item~\cite{BSS89}, while we examine agents who wish to buy an item each day. The closest work to ours is Berbeglia et al.~\cite{berbeglia:storable}, in which both pre-announced pricing as well as contingent pricing\footnote{That is, ``threat-based'' pricing, in which a seller can announce that if consumers won't buy on day $t$, the price will increase on day $t+1$, otherwise it will stay the same.} were studied. They compared these pricing policies over a finite time period (i.e., there is a known fixed number of days) for an unlimited supply of an indivisible item (e.g., digital goods). They proved that for pre-announced pricing mechanisms with linear (per item unit per day) storage cost and unlimited storage time, there is an optimal set of prices such that for these prices, the consumers do not need to store any goods so as to maximize their utility. They also gave a dynamic program to find the optimal set of prices to maximize the monopolist's revenue.

Slightly further afield, Dasu and Tong \cite{dasu} considered the pricing problem when there is a fixed number of items, goods are perishable and there is a finite time horizon. Beyond their numerical experiments, they showed that if buyers are not strategic, contingent pricing dominates pre-announced pricing, but this result does not hold if consumers are strategic (as they are in our case).

There has been some research about these topics when assuming there are only two (rather than $T$) time periods. Focusing more on pre-announced pricing (as we do), but in a different setting, Correa et al.~\cite{correa} proposed a new pre-announced pricing policy, in which the seller commits to a price menu and dynamically chooses a price in the menu based on available inventory. They considered a limited inventory of an item and different arrival times for consumers. They proved the existence of an equilibrium and they also showed that under certain conditions their pricing policy outperforms contingent and pre-announced pricing policies.

Aviv and Pazgal \cite{aviv} studied the pricing problem in another limited setting, assuming not only 2 time periods, but also assuming consumer arrival times are drawn from a Poisson distribution. They compared a pre-announced pricing policy with a contingent one in which the seller sets the prices based upon the seller's inventory and declining consumer valuations. They argued that the monopolist can increase her revenue up to eight percent in the pre-announced pricing policy compared to contingent pricing.

Our setting is a particular instance of Stackelberg games, on which there has is extensive research, though that has been focusing in the recent past on security games (e.g.,~\cite{KNFBSTJ17,XFCDT16,KFFSTL16}).
\section{Model} 

We study the problem of pricing where a monopolist tries to sell an unlimited supply \footnote{An unlimited supply can be either a digital good (e.g., a newspaper with online subscription), and moreover, in practice we conceptually think that certain items can be produced so as to satisfy any demand. For example, in some countries eggs and milk seem to be in unlimited supply, and in Iceland one can believe that there is an infinite supply of renewable energy.} of a certain product or good at times $1, 2, \ldots, T$. She sets the price for time $i$ to $p_i$, being aware of the valuations of the consumers. She notifies the consumers of the prices for all time periods before purchases commence (i.e., before $t=1$). The number of units of goods sold at time $i$ is $q_i$. The monopolist's objective is to maximize her revenue which is equal to $\sum_{i=1}^Tq_ip_i$.

On the buyer side, we have one or many self-interested rational (i.e., wanting to maximize their utilities) consumers with a value for consuming goods. They can buy on any day and store for consumption on other days. We define the valuation function $v(i, t)$ with domain $\mathbb{N}\times [T]$ for items consumed ``fresh'' (i.e., that have not been stored). Their utility is the value of the items they consume on a given day minus the price they paid for the items and their storage cost. We assume in the case of a tie, the buyer prefers to store the goods as little as possible. Following \cite{berbeglia:storable} we discuss two cases regarding the number of consumers and their demands:

\begin{description}
\item[Multi-buyers]Multiple consumers, each demanding only a single unit of demand. So on day $t$, if we sort the consumers' values for one unit in decreasing order, the consumer's value is $v(i, t)$ which is the $i^{th}$ highest value on day $t$. $v(i, t)$ is non-increasing in $i$.

\item[Single-buyer]One buyer with many units of demand. In this setting, $v(i, t)$ represents the consumer's non-increasing marginal value for the $i^{th}$ unit of goods on day $t$. In other words, $v(i, t)$ is the consumer's value for $i$ units of goods minus the value for $i-1$ units of the goods, so the total value for consuming $i$ units on day $t$ is $\sum_{j=1}^{i}v(j, t)$. We use \cite{berbeglia:storable}'s assumption that there is a cap on the number of items desired by the consumer, i.e., there exists $H\in \mathbb{N}$ such that $v(H, t)=0$ for all $t\in T$.

\end{description}

In the \textit{multi-buyer} setting we have $N$ consumers, and in the \textit{single-buyer} setting, the maximum demand on any day is $N$. The total number of days is $T$. Consumers may have to pay for storing the goods and this cost is defined as storage cost. While \cite{berbeglia:storable} study both linear and concave cost, we only study linear storage cost with cost $c$ per day per unit.

Section \ref{sec:shelf-life} begins the study of items with a limited shelf-life, $d$, after which the item loses all value. We extend this model in 
Section \ref{fractional values} where we consider the pricing problem in the setting where the value of the good when stored becomes a fraction of the initial value. We define a function $r:[T]\rightarrow [0, 1]$ which takes an integer $l$ which is the number of days that a good is going to be stored before consumption and returns a fraction $r(l)$ that specifies the value of the good if consumed $l-1$ days after purchase. The function $r$ is a non-increasing step function. We use this function $r$ to define another valuation function to represent decreasing values. In the \textit{single-buyer} case, let $v'(i, t, l)$ be the value of consuming the $i^{th}$ unit on day $t$, where this unit has been stored $l-1$ days\footnote{In the \textit{multi-buyer} case, $v'(i, t, l)$ is the $i^{th}$ largest value on day $t$ when this unit has been stored $l-1$ days.}. Therefore, we have $v'(i, t, l) = v(i, t)r(l)$. 

In \textit{single-buyer} setting, our model is not well-defined yet since the units that are consumed on a particular day can be bought on different days, so the value of the consumed items is not clear. The following immediate observation resolves this definitional issue.

\begin{observation}
In the \textit{single-buyer} setting, if the buyer has chosen some $k$ specific units for day $t$'s consumption that were bought on different days, the order of 
consumption of those $k$ items is to consume those purchased most recently first. So if they are ordered according to number of days they are stored -- $d_{1},\ldots,d_{k}$ -- the consumer's value is $\sum_{i=1}^{k}v'(i,t,d_{i})$.
\end{observation}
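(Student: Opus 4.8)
The plan is to treat this as a value-maximization over the order of consumption and to settle it with a rearrangement (exchange) argument. The definitional issue is that when the $k$ units earmarked for day $t$ were purchased on different days, they carry different storage durations, and the marginal-value structure means the total value depends on which unit is eaten ``first,'' ``second,'' and so on. So I would first make this dependence explicit: writing the storage durations of the $k$ chosen units as a multiset and letting $\pi$ range over the bijections assigning units to the consumption ranks $1,\ldots,k$, the value of the day-$t$ bundle under ordering $\pi$ is $\sum_{i=1}^{k} v(i,t)\, r(\ell_{\pi(i)})$, where $\ell_{\pi(i)}$ is the storage parameter of the unit placed at rank $i$ and I use $v'(i,t,l)=v(i,t)r(l)$. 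The claim to establish is that this sum is maximized by the ordering that places the least-stored (freshest) unit at rank $1$, the next-freshest at rank $2$, and so on; the value of the bundle is then \emph{defined} to be this maximum, which is exactly $\sum_{i=1}^{k} v'(i,t,d_i)$ when the durations are sorted $d_1 \le d_2 \le \cdots \le d_k$.

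Second, I would prove optimality of the freshest-first ordering by an adjacent-exchange step. Suppose some ordering places at a higher-value rank $i$ a unit stored at least as long as the unit at a lower-value rank $j$, i.e.\ $i<j$ but the rank-$i$ unit has storage parameter $\ell_i \ge \ell_j$ (the rank-$j$ unit). Swapping the two units changes the bundle value by $[v(i,t)-v(j,t)]\,[r(\ell_j)-r(\ell_i)]$. Both bracketed factors are nonnegative---the first because $v(\cdot,t)$ is non-increasing in its first argument, the second because $r$ is non-increasing and $\ell_i \ge \ell_j$---so the swap does not decrease the value. Iterating such swaps (a finite bubble-sort that eliminates one inversion at a time) transforms any ordering into the freshest-first ordering without ever decreasing the value, which shows freshest-first is optimal and hence that the bundle value is well-defined and equals $\sum_{i=1}^{k} v(i,t) r(d_i)$.

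Finally, I would invoke the paper's tie-breaking convention to pin down the ordering uniquely when equalities occur: among all value-maximizing orderings the buyer prefers to store goods as little as possible, which singles out the freshest-first ordering and matches the ``consume those purchased most recently first'' phrasing of the statement. I do not expect a genuine obstacle here---this is why the statement is an \textbf{observation}---but the one place to be careful is the formalization in the first step: the point is to recognize that the value of a collection of differently-aged units is itself the outcome of an internal optimization over consumption order, rather than a fixed quantity, and that the joint monotonicity of $v(\cdot,t)$ and $r$ is precisely what makes the monotone pairing optimal. That reframing is the definitional gap the observation is closing.
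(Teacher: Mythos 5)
Your proof is correct, but it takes a genuinely different route from the paper's. You resolve the definitional ambiguity internally to the consumption decision: you define the bundle's value as the maximum over assignments of units to marginal-value ranks and prove, via an adjacent-swap rearrangement argument using only the monotonicity of $v(\cdot,t)$ in rank and of $r$ in storage duration, that the freshest-first assignment attains that maximum (the swap identity $[v(i,t)-v(j,t)][r(\ell_j)-r(\ell_i)]\geq 0$ is exactly right). The paper instead argues by revealed preference at \emph{purchase} time: if the unit stored $d_1$ days was bought at price $p_1$ rather than $d_2$ days earlier at price $p_2$, optimality of that choice gives $v_1(r(d_1)-r(d_2))>p_1-p_2$, while the symmetric preference for the unit stored $d_2$ days gives $v_2(r(d_1)-r(d_2))<p_1-p_2$, whence $v_2<v_1$ --- so the longer-stored unit \emph{is} the lower-marginal-value unit, and the consumption order is forced by rational purchasing rather than merely being the value-maximizing choice. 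Each approach buys something: the paper's version extracts strict information ($v_2<v_1$) from equilibrium purchase behavior, though it implicitly requires $r(d_1)>r(d_2)$ to divide and presumes the purchases were strictly optimal; your version is more elementary and more general --- it needs no prices, applies to any chosen bundle whether or not the purchase plan was optimal, and handles ties ($r(d_1)=r(d_2)$ or equal marginal values) cleanly, since the sorted assignment always attains the maximum. One small quibble: your appeal to the store-as-little-as-possible tie-breaking rule is inapposite, since permuting the consumption ranks of units all consumed on day $t$ changes no storage duration; but ties are harmless anyway, because every maximizing assignment yields the same value $\sum_{i=1}^{k}v'(i,t,d_i)$.
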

\begin{proof}
To prove this we show that if the consumer buys two units on different days to consume on the same day, the unit which is stored longer must have less value. Hence it is considered the second unit and the other unit is considered the first unit. Let $d_1$ and $d_2$ be the number of days these units are stored and $p_1$ and $p_2$ be the price at which they are bought. We assume $d_1$ is less than $d_2$ (thus, $r(d_{1})\geq r(d_{2}$); therefore, $p_2$ is less than $p_1$, as otherwise, it would be more beneficial to buy unit 2 at price $p_1$ and store it $d_1$ days as well. If the consumer's value for unit 1 is $v_1$ and for unit 2 is $v_2$, we show $v_1$ must be more than $v_2$.

Since the buyer preferred to buy unit 1 when they did ($d_{1}$ days ago; price $p_{1}$), and not buy it $d_{2}$ days ago at price $p_{2}$:

\begin{eqnarray*}
r(d_1)v_1-p_1>&r(d_2)v_1-p_2\\
v_1\cdot(r(d_1)-r(d_2))>&p_1-p_{2}
\end{eqnarray*}

The buyer also preferred to buy unit 2 when it did ($d_{2}$ days ago; price $p_{2}$), and not buy it $d_{1}$ days ago at price $p_{1}$, hence:

\begin{eqnarray*}
r(d_1)v_2-p_1<&r(d_2)v_2-p_2\\
v_2\cdot(r(d_1)-r(d_2))<&p_{1}-p_2
\end{eqnarray*}
Combining these:
\begin{eqnarray*}
v_2\cdot(r(d_1)-r(d_2))<&v_1\cdot(r(d_1)-r(d_2))\\
v_2<&v_1
\end{eqnarray*}
\end{proof}

Practically, in all results the single-buyer and multi-buyer cases are essentially equivalent, and a single proof suffices for both cases.

We note again that this particular setting is an instance of a Stackelberg game, which is defined as a 2-stage game, in which a ``leader'' announces their strategy and the ``followers'' respond to it. A solution to this game is a subgame perfect Nash equilibrium, in which the leader (e.g. a seller) choses the strategy that will maximize their profit assuming that followers (e.g. buyers) will best-respond to it. This is exactly the type of solution we examine here.

\section{$d$-day Shelf-Life}
\label{sec:shelf-life}
In this model, a consumer in the \textit{single-buyer} case or consumers in the \textit{multi-buyer} case can only store the goods for less than $d$ days after which the good is worthless. If $d$ is equal to 1, it means the goods must be consumed on the same day they are bought. Proofs are written for the single-buyer case, but can be easily applied to the multi-buyer one.

\begin{theorem}
The largest possible revenue of the monopolist is a non-increasing function of $d$, and in some cases will be strictly decreasing.
\end{theorem}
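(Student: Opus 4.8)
The plan is to prove the two claims separately: weak monotonicity by a transformation argument and strictness by an explicit instance. For monotonicity it suffices to show $R_d \ge R_{d+1}$ for every integer $d$, where $R_d$ is the optimal revenue under shelf-life $d$; the guiding intuition is that enlarging $d$ only adds options to the buyers (longer storage becomes feasible), which relaxes the constraints the seller must respect in equilibrium and hence can only hurt her. Concretely, I would fix seller-optimal prices $\mathbf p=(p_1,\dots,p_T)$ for shelf-life $d+1$ together with the induced buyer best response, and build a price vector $\mathbf p'$ for shelf-life $d$ whose best response yields revenue at least $R_{d+1}$.

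The first key step is a structural simplification forced by the unlimited-supply assumption: the cost of obtaining a unit consumed on day $t$ via a purchase on day $s$ is the effective price $p_s+(t-s)c$, and since supply is unbounded the buyer acquires \emph{every} unit it consumes on day $t$ through the single cheapest feasible day $s^{*}(t)=\arg\min_{s}(p_s+(t-s)c)$, the minimum taken over $s$ with $0\le t-s\le d-1$. Thus on each consumption day the buyer faces one effective price $e_t=\min_s(p_s+(t-s)c)$, buys exactly the units of value at least $e_t$, and the seller banks $p_{s^{*}(t)}$ on each. In particular the fresh and stored routes to a given day are never used simultaneously at different prices, which removes the most obvious obstruction to repricing. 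Now classify the purchased units by storage length. Units stored at most $d-1$ days are already feasible under shelf-life $d$, and since shrinking the buyer's option set cannot revive a discarded plan, if the $(d+1)$-optimum uses no storage of length exactly $d$ then $\mathbf p'=\mathbf p$ reproduces the outcome and $R_d\ge R_{d+1}$ at once. The work is to reroute the units stored exactly $d$ days, say bought on $s$ and consumed on $t=s+d$: revealed preference with the stated tie-break favoring less storage gives $p_s+dc<p_t$, so the natural move is to reprice day $t$ down to the buyer's old effective cost $e_t=p_s+dc$, keeping the buyer willing to buy fresh at unchanged utility while the seller now banks $p_s+dc\ge p_s$, weakly improving revenue.

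The step I expect to be the main obstacle is showing that these local reroutings compose into one consistent vector $\mathbf p'$ without a revenue-losing cascade: lowering the price on a day $t$ also lowers the effective price $e_{t'}$ for later days $t'$ that could now source from $t$, redistributing purchases across days through the shared $\min$. I would control this by an induction on $d$ combined with a monotone, ordered exchange argument---processing consumption days in a fixed order and only ever lowering prices to previously-paid effective levels---and argue, using the structural simplification above, that each step weakly increases total revenue, so the final $\mathbf p'$ certifies $R_d\ge R_{d+1}$.

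For strictness I would exhibit a small instance where a longer shelf-life forces the seller to distort prices to deter arbitrage. Take $T=2$ with a single buyer having $v(1,1)=1$, $v(1,2)=3$, $v(2,\cdot)=0$, and $c=1$. For $d=1$ no storage is possible, the two days decouple, and the monopoly prices $(1,3)$ give $R_1=4$; for $d=2$ the buyer can carry a day-$1$ purchase to day $2$ at effective cost $p_1+1$, and a short case analysis over $(p_1,p_2)$ shows every price vector earns at most $3$, so $R_2=3<R_1$. This witnesses the strict decrease and completes the theorem.
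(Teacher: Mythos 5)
Your overall plan matches the paper's: prove $R_{d}\ge R_{d+1}$ by taking optimal prices for the longer shelf-life and repricing so that the units stored the maximal length get rerouted, and witness strictness by an explicit instance. Your strictness example is correct (I verified $R_1=4$ and $R_2=3$ with the stated tie-break), and it is in fact a useful addition, since the paper's own proof only establishes the non-increasing part. However, the monotonicity argument has a genuine gap, and it is exactly the step you flag as the main obstacle. You reprice the \emph{consumption} day: for a unit bought on $s$ and consumed on $t=s+d$, you set $p'_t=p_s+dc$. For $d\ge 2$ this creates strictly \emph{new} cheap routes forward: under shelf-life $d$ a unit bought on day $t$ can serve any day up to $t+d-1=s+2d-1$, while the old route from $s$ reached only $t$. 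So for days $u\in(t,\,t+d-1]$ the effective price can drop strictly below its old level ($p'_t+(u-t)c=p_s+(d+u-t)c$, which can undercut $e_u^{\mathrm{old}}$ because the route via $s$ never reached $u$ and the route via $t$ was at the higher price $p_t$). If the seller had priced such a day $u$ to extract a large valuation $V$, her take there collapses from $V$ to at most $p_s+(d+u-t)c$, while the gain at $t$ is only $dc$; so your claim that ``each step weakly increases total revenue'' is false for this repricing, and no processing order of consumption days repairs it, since the forward leakage is independent of the order in which you lower prices.

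The paper closes this with a different, and cascade-free, repricing, which is the missing idea. Let $s$ be the \emph{last} purchase day with full-duration storage. The tie-break (store as little as possible) forces $p_u>p_s+(u-s)c$ for all $u$ strictly between $s$ and the consumption day; in particular $p_{s+1}>p_s+c$, which implies $q_{s+1}=0$ --- nothing at all is sold on day $s+1$. The paper then sets $p'_{s+1}=p_s+c$, i.e., it reprices the zero-sales day \emph{after the purchase day}, not the consumption day. This new price exactly replicates the already-available route from $s$: sourcing any later day $u$ from $s+1$ costs $p_s+c+(u-s-1)c=p_s+(u-s)c$, and the reach from $s+1$ under the shorter shelf-life coincides with the reach from $s$ under the longer one. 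Hence the entire vector of effective prices is unchanged, consumption is unchanged, only the routing of the full-storage units shifts from $s$ to $s+1$ (now feasible under the shorter shelf-life), the seller banks $p_s+c\ge p_s$ per rerouted unit, and the argument iterates backward over earlier full-storage days with the same invariant. Without that invariance of effective prices --- which your choice of repriced day destroys --- the composition step in your proposal does not go through.
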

\begin{proof}
Let us assume our buyer can store the goods for $d$ days. We prove that if they can store the goods for $d^\prime=d-1$ days, the monopolist can make as much money as in the $d$-day case.

Assume that the monopolist's best strategy when the buyer can store for $d$ days is $p_1, p_2, \ldots, p_T$. There are two cases regarding the monopolist's best strategy. In the first, the monopolist's prices are such that although the buyer can store the goods for $d$ days, it is not beneficial to do so. Hence, if we reduce the duration to $d^\prime$, the monopolist can use the same strategy making the same amount of money. 

The second case is when it is beneficial for the buyer to store some goods for $d$ days. In this case, we describe a new strategy $p^\prime_1, p^\prime_2, \ldots, p^\prime_T$ by which the monopolist makes at least the same amount of money. Let us assume day $s$ is the last day that our buyer is going to store one unit of the goods to consume $d$ days later; that is, to consume on day $s+d-1$. As noted before, in the case of a tie the buyer prefers to store the goods as little as possible. Therefore, $p_t>p_s+(t-s)c$ for all $s<t\leq s+d-1$ since otherwise, the buyer would be better off buying the extra units of goods on day $t$ to consume on day $s+d-1$, instead of day $s$.
The buyer is not going to store for $d$ days on day $s+1$, since $s$ is the last day that the item is going to be stored for $d$ days. Moreover, now items will be stored for less than $d$ days on day $s+1$ since $p_{s+1}>p_s+c$ as stated above. Therefore, $q_{s+1}=0$. When the buyer can only store the goods for $d^\prime$ days, if we set $p^\prime_{s+1}=p_s+c$, the buyer's behavior, in terms of purchase, for the days before $s$ do not change at all (since it is not possible to buy on day $s$ or $s+1$ anyway to consume on previous days, so later prices do not need to be taken into account). The behavior for the days after $s+1$ also does not change because prices did not change. The only changes are at times $s$ and $s+1$ when the buyer bought on day $s$ and stored for future. Under the new prices, the item can be bought on day $s+1$ instead since $p^\prime_{s+1}$ is equivalent to $p_s$ for the buyer. The amount of the goods does not change since from buyer's perspective their utility has not changed either. Hence, overall, cost of buying and storing the goods has not changed from the $d$-day case.
\end{proof}

\begin{corollary}
As a result, the monopolist makes the most money when the goods must be consumed on the day they are purchased (i.e., $d=1$). 
\end{corollary}

We begin exploring our limited shelf-life problem by noting that a significantly useful and simplifying result from \cite{berbeglia:storable} (Theorem 3.1) does \emph{not} hold in our case. In their setting (i.e., for items with infinite shelf-life), optimal pricing results in buyers not using storage at all; that is, buyers consume items on the days they buy them.

\begin{theorem}
\label{need storage}
There are settings where the monopolist will make less money as the storage cost is increased. Additionally, in this setting with limited shelf-life, the best strategy for the monopolist sometimes makes consumers store the items.
\end{theorem}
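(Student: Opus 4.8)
Since the statement is existential, the plan is to exhibit one explicit instance (in the \emph{single-buyer} case, which by the remark following the Observation also settles the \emph{multi-buyer} case) and to analyze its optimal revenue directly as a function of $c$. The conceptual point I want to isolate first is \emph{why} a limited shelf-life can defeat the no-storage conversion that works for infinite shelf-life: the cheapest way to have a unit available for consumption on day $t$ costs $\min_{\,t-d+1\le s\le t}\bigl(p_s+(t-s)c\bigr)$, and when $d$ is finite a low price on a day more than $d-1$ steps in the past is simply unreachable. Consequently the seller can hold a late-day price high (to skim a high-value late demand) while still serving an \emph{intermediate} day cheaply through storage from an early day; collapsing that storage would force the intermediate and late prices to be linked through the ``$+c$'' chain, which is exactly what destroys revenue.

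Concretely, I would take $T=3$, $d=2$, one unit of demand per day with values $a,a,b$ where $b\gg a$ (e.g.\ $a=10$, $b=100$), and storage cost $c<a/2$. I would then compare two families of price vectors. First, a storage strategy: set $p_1=a-c$, a large $p_2$ (say $p_2>b$), and $p_3=b$. One checks that the day-$2$ unit is bought on day $1$ and stored (effective price $a$; note day $3$ cannot reach day $1$ since $d=2$), the day-$1$ and day-$3$ units are bought fresh, and the seller collects $2(a-c)+b$. Second, I would argue that the best strategy under which \emph{no} buyer stores collects only $a+b$: to realize an effective day-$3$ price near $b$ without triggering storage, day $3$ must be served fresh, which forces $p_2\ge b-c\gg a$; but then serving day $2$ at all would require the route $p_1+c\le a$, i.e.\ storage, which is forbidden, so day $2$ must be abandoned and one is left serving only days $1$ and $3$.

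Comparing the two families shows the storage strategy strictly dominates precisely when $2(a-c)+b>a+b$, i.e.\ when $c<a/2$; throughout that range the monopolist's optimum uses storage, giving the second assertion. Moreover, on $0<c<a/2$ the optimal revenue equals $2a+b-2c$, which is strictly decreasing in $c$, giving the first assertion (and contrasting with the infinite-shelf-life model, where lowering $c$ only tightens the buyers' arbitrage constraints and so cannot raise revenue). I expect the main obstacle to be the second family: one must genuinely rule out \emph{every} no-storage price vector that simultaneously skims day $3$ at $\approx b$ and collects on day $2$, which is exactly where the bound $d=2$ (day $3$ cannot reach back to day $1$) is essential. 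A secondary nuisance is tie-breaking at zero surplus and at equal-cost purchase routes, which I would neutralize either by invoking the paper's convention that ties favor the least storage or by perturbing the three values by an $\varepsilon$ so that every comparison above becomes strict.
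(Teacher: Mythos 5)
Your proposal is correct and takes essentially the same route as the paper's proof: a three-day instance with $d=2$ and valuations low, low, high ($1,1,1000$ in the paper), where the huge day-3 value forces high posted prices on days 2 and 3, so day 2 can only be served via storage from day 1, and increasing the storage cost erodes exactly that channel. Your parameterized version ($a,a,b$ with $p_1=a-c$, $p_3=b$) merely refines the paper's comparison of the two specific costs $0$ and $2$ into the exact, strictly decreasing optimal revenue $2a+b-2c$ on $0<c<a/2$, with the verification burden you flag (ruling out all competing price vectors) being precisely the step the paper's terser argument glosses over.
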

\begin{proof}
We give an example of the situation where the monopolist makes less money when the storage cost increases, and where the consumer will always use the storage. In this example, let $d = 2$ and the cost of storage $0$. As shown in Table~\ref{table:counter example}, we assume that there is a single consumer whose values for one unit of the goods on days 1, 2 and 3 are 1, 1 and 1000, respectively and for additional units is 0. Because the largest value is equal to 1000, the best price for days 2 and 3 must be 1000 to ensure the large payment on day 3. For the first day, best price is 1.

When the storage cost is 0, the consumer will buy two units on the first day, to consume on days 1 and 2 -- unlike the infinite shelf-life case, storage \emph{must} be used for optimal case. However, when the storage cost is 2, the consumer will only buy one unit on the first day. Hence, the monopolist's profit is reduced with the increase in costs.
\begin{table}
\begin{center}
\begin{tabular}{|c|c c c|}
\hline
 Day: & {\bf1} & {\bf2} & {\bf3}\\
 \hline 
 Consumer valuation & 1&1&1000\\
 Optimal price (all cases) &1 &1000&1000\\
Seller profit when storage cost is 0 & 2 & 0 & 1000\\ 
Seller profit when storage cost is 2 & 1 & 0 & 1000\\
 \hline
\end{tabular}
\caption{Example in Theorem~\ref{need storage}}\label{table:counter example}
\end{center}
\end{table}
\normalsize
\end{proof}

Throughout this Section let $M$ be the maximum amount of money that the monopolist makes when goods are always consumed on the day of purchase and cannot be stored (i.e. $d = 1$).

\begin{theorem}\label{lowerbound d-day}
When goods can be stored for $d$ days, the monopolist makes at least $\frac{M}{d}$ amount of money for any \footnote{In fact, this theorem holds for an arbitrary weakly monotonic cost function but we are only considering linear cost functions in this work.} linear storage cost function $c$.
\end{theorem}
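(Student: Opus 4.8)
The plan is to reuse the optimal no-storage ($d=1$) pricing but ``thin it out'' so that the limited shelf-life cannot be exploited for storage. Let $p_1,\dots,p_T$ be an optimal set of prices for the $d=1$ case, and let $q_i$ be the number of units bought on day $i$ under these prices, so that $M=\sum_{i=1}^T q_i p_i$. I would partition the days into the $d$ residue classes modulo $d$, namely $C_j=\{\, i : i\equiv j \pmod d \,\}$ for $j=1,\dots,d$. The defining feature of each class is that consecutive active days are exactly $d$ apart; since an item bought on day $t$ is worthless by day $t+d$, it is impossible to buy on one day of $C_j$ and store the item for consumption on the next day of $C_j$.

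First I would pick, by averaging, the most profitable class: since the $C_j$ partition $[T]$, we have $\sum_{j=1}^d \sum_{i\in C_j} q_i p_i = M$, so some class $C_{j^\star}$ satisfies $\sum_{i\in C_{j^\star}} q_i p_i \ge M/d$. I then define a candidate $d$-day strategy that keeps the old prices on the active days of $C_{j^\star}$ and prices every other day out of the market: set $p_i'=p_i$ for $i\in C_{j^\star}$, and set $p_i'$ larger than every valuation $v(k,t)$ for $i\notin C_{j^\star}$, so that $q_i'=0$ on inactive days and no inactive day can serve as a cheap source for stored goods.

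The core of the argument --- the step I expect to be the main obstacle --- is to show that under $p'$ the seller collects at least $\sum_{i\in C_{j^\star}} q_i p_i$. The key structural observation is that on an active day $i\in C_{j^\star}$ the buyer's only way to obtain goods for consumption on day $i$ is to purchase them on day $i$ itself: goods bought on the previous active day $i-d$ have already expired, and every intervening day is priced out. Hence on each active day the buyer faces essentially the same isolated decision as in the $d=1$ instance, and buys at least the $q_i$ units whose fresh value exceeds $p_i$ (the tie-breaking rule, ``store as little as possible,'' keeps these consumed fresh). Any further purchases the buyer makes on day $i$ --- to store for consumption on the following inactive days within the shelf-life window --- occur only when utility-improving and are made at price $p_i\ge 0$, so they can only increase revenue. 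Because the active days are spaced $d$ apart, no purchase is ever stored across two active days, so the active days do not interfere with one another and each contributes at least $q_i p_i$.

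Finally I would observe that nothing in this construction used the specific linear form of the storage cost: the argument relies only on the geometric fact that items expire within $d$ days and on storage never being utility-increasing to prolong at nonnegative cost, so the same bound $M/d$ holds for any weakly monotonic cost function, matching the footnote. As elsewhere in the paper, the single-buyer and multi-buyer analyses are identical, since in both cases the demand satisfiable on an active day must be bought on that day.
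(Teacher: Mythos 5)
Your proposal is correct and takes essentially the same approach as the paper: both partition the days into the $d$ residue classes modulo $d$, retain the optimal $d=1$ prices on one class while pricing all other days out of the market (so expiry plus the priced-out intermediate days force each active day's demand to be bought fresh that day), and conclude by averaging that some class yields at least $M/d$. The only cosmetic difference is that you select the best class up front, whereas the paper considers all $d$ pricing schemes and notes their revenues sum to at least $M$; your extra remarks (storage purchases only add revenue, and the argument needs only weak monotonicity of the storage cost) are sound and consistent with the paper's footnote.
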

\begin{proof}
We set prices so that the monopolist will make $\frac{M}{d}$ amount of money for any linear cost function and any set of consumer (or consumers, in the multi-buyer setting) values. Let $p_1, \ldots, p_T$ be the optimal prices when $d = 1$ and let $q_1, \ldots, q_T$ be the number of units purchased on each day. Now, to set the prices, consider $d$ different pricing options. In each case $t$ ($0< t\leq d$), the goods' price, on day $sd+t$ for all $0\leq s<{\lfloor}{\frac{T}{d}}{\rfloor}$ is equal to $p_{sd+t}$ and the goods' prices for other days are very high (effectively, $\infty$). The buyer will not store the goods for day $sd+t$ because the prices on days $(s-1)d+t+1, (s-1)d+t+2, \ldots, sd+t-1$ are large numbers and greater than $p_{sd+t}$. 
Therefore, the amount bought on day $sd+t$ will be at least $q_{sd+t}$. Hence, summing over the $d$ different pricing schemes, the sum of the revenue in these $d$ cases is more than or equal to $M$, so there is at least one of them for which the revenue for the $d$-day case is more than or equal to $\frac{M}{d}$.
\end{proof}

That theorem showed a lower bound for the seller's profit. We now show this bound is tight: 

\begin{theorem}\label{upperbound d-day}
For any $\epsilon>0$, there is a setting in which the monopolist's maximal revenue is less than $(1+\epsilon)\frac{M}{d}$.
\end{theorem}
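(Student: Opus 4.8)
The plan is to exhibit a single family of instances, parameterized by a real $\rho > 1$, for which the gap between the optimal storage revenue and $M/d$ shrinks to $0$ as $\rho$ grows. I would work in the \emph{single-buyer} case with the horizon set to exactly $T = d$ days and with free storage ($c = 0$), since free storage maximizes the buyer's ability to substitute cheap early purchases for expensive later ones and is therefore the regime in which the $M/d$ lower bound should be tight. On day $t$ (for $1 \le t \le d$) I would give the buyer $n_t = \rho^{d-t}$ units of marginal value $w_t = \rho^t$ and nothing more (i.e.\ $v(i,t) = \rho^t$ for $i \le \rho^{d-t}$ and $0$ otherwise), so that values rise geometrically over time while the available quantity falls geometrically. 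The point of this balance is that each day contributes exactly $w_t n_t = \rho^d$ to the single-day revenue, so that $M = d\rho^d$ and hence $M/d = \rho^d$.

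Next I would pin down the seller's optimization once storage is allowed. Because $T = d$, the window $W(t) = \{t-d+1, \dots, t\}$ is simply the prefix $\{1, \dots, t\}$, so the cheapest way for the buyer to obtain a unit consumed on day $t$ costs $m_t := \min_{j \le t} p_j$ (storage is free and consumption within $d$ days preserves full value). The buyer then buys all $n_t$ of day $t$'s units when $m_t \le w_t$ and none otherwise, so the revenue collected on day $t$ is $m_t n_t$ when $m_t \le w_t$ and $0$ otherwise. The key structural observation is that, as the seller varies the prices, $(m_t)$ ranges over exactly the non-increasing sequences (any prefix-minimum sequence is non-increasing, and $p_t = m_t$ realizes any non-increasing target), so the seller's problem reduces to maximizing $\sum_{t : m_t \le w_t} m_t n_t$ over non-increasing $m_1 \ge \dots \ge m_d$.

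I would then solve this reduced problem in closed form. Since $w_t$ is increasing and $(m_t)$ non-increasing, the set of served days $\{t : m_t \le w_t\}$ is always a suffix $\{k, \dots, d\}$; for a fixed such suffix the revenue $\sum_{t=k}^d m_t n_t$ is maximized by the largest feasible constant choice $m_k = \dots = m_d = w_k$, giving $f(k) := w_k \sum_{t=k}^d n_t = (\rho^{d+1}-\rho^k)/(\rho-1)$. As $f(k)$ is decreasing in $k$, the optimum is at $k=1$, so the monopolist's maximal revenue is $f(1) = \rho(\rho^d - 1)/(\rho - 1)$, attained by the flat price vector $p_t \equiv \rho$. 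Dividing by $M/d = \rho^d$ gives the ratio $(1-\rho^{-d})/(1-\rho^{-1}) = 1 + \rho^{-1} + \dots + \rho^{-(d-1)} < 1 + 1/(\rho-1)$, so choosing any $\rho > 1 + 1/\epsilon$ forces the maximal revenue strictly below $(1+\epsilon)M/d$.

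The main obstacle is the middle step: rigorously justifying the reduction of the Stackelberg game to the static optimization over non-increasing effective prices. This requires verifying that the buyer's utility-maximizing response truly sources each consumed unit at the windowed-minimum price $m_t$, so that the revenue decomposes cleanly as $\sum_t m_t n_t \cdot \mathbf{1}[m_t \le w_t]$ with no cross-day coupling beyond the shared prices, and that every non-increasing sequence is induced by some price vector. Once that reduction is secured, the remaining optimization and limit are routine. I would also note that integrality is harmless, since taking $\rho$ to be an integer makes every $n_t = \rho^{d-t}$ a positive integer, and that the identical instance with $n_t$ buyers of value $w_t$ settles the multi-buyer case.
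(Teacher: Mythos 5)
Your proposal is correct and follows essentially the same route as the paper: with $T=d$ and $c=0$, build an instance whose per-day quantities decrease geometrically while values increase geometrically, argue that the seller is effectively forced into a flat price, and compute that the resulting revenue is within a factor $1+O(1/\rho)$ of $M/d$ (the paper picks its values $\frac{b(a-1)}{a^{d-t+1}-1}$ so that every candidate flat price yields the same revenue $b$, whereas you normalize so that every day contributes equally to $M$ and then solve the suffix/flat-price optimization explicitly — your prefix-minimum reduction is in fact a more rigorous version of the paper's informal ``prices might as well be constant'' argument). The only substantive difference is that the paper additionally extends the construction to horizons $T=kd$ via geometrically scaled blocks, but since the theorem only asserts the existence of \emph{some} setting, your $T=d$ instance already suffices.
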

\begin{proof}
Choose a natural number $a$ such that $\frac{1}{a-1}<\epsilon$. At first, assume $T=d$, then later, we will complete the proof for $T=kd$ for $k\in \mathbb{N}$.
For simplicity, we will assume that $c = 0$\footnote{For $c>0$, by increasing $b$ to an arbitrarily high value, we can make values large enough, and the difference between each day significant enough, so the behavior is practically as if $c=0$.}. We now define $b$ as $\prod_{t=1}^d(a^{d-t+1}-1)$. For each day $t$, $1\leq t\leq d$, the buyer wishes to purchase $a^{d-t}$ items, each with a value of to $\frac{b\cdot(a-1)}{a^{d-t+1}-1}$. Any additional item has a value of $0$.

For the monopolist, it is beneficial to set the price to one of the valuations of the buyer, since otherwise, it can increase the price without losing any purchase, so prices are of the set $\{\frac{b\cdot(a-1)}{a^{d-t+1}-1}: 1\leq t\leq d\}$. Increasing prices as time goes on does not increase the revenue, since the buyer can buy when prices are lower and store for later, as storage cost is 0. Since values are going up, reducing the price does not increase the revenue either. Therefore, the monopolist just sets a fixed price for all days, which, as noted, should be equal to some item's value. So the monopolist's revenue equals $\frac{b\cdot(a-1)}{a^{d-t+1}-1}$ (for some $0<t\leq d$) times the number of items with value more than or equal to the price. I.e., $\frac{b\cdot(a-1)}{a^{d-t+1}-1}\cdot \sum_{t=0}^{d-t}a^t=\frac{b\cdot(a-1)}{a^{d-t+1}-1}\cdot\frac{a^{d-t+1}-1}{a-1}=b$. In comparison, $M=\sum_{t=1}^d\frac{b\cdot(a-1)}{a^{d-t+1}-1}\cdot a^{d-t}$. Monopolist revenue in the $d$-day case compared to $M$ is:\\
\small
$$
\frac{b}{\sum_{t=1}^d\frac{b\cdot(a-1)}{a^{d-t+1}-1}\cdot a^{d-t}}=\frac{1}{\sum_{t=1}^d\frac{(a-1)}{a^{d-t+1}-1}\cdot a^{d-t}}$$
$$\frac{1}{\sum_{t=1}^d\frac{a^{d-t+1}-a^{d-t}}{a^{d-t+1}-1}} < \frac{1}{\sum_{t=1}^d\frac{a^{d-t+1}-a^{d-t}}{a^{d-t+1}}}$$
$$=\frac{1}{\sum_{t=1}^d1-\frac{1}{a}}=\frac{1}{d-\frac{d}{a}}=\frac{1}{d(1-\frac{1}{a})}=\frac{1}{d(\frac{a-1}{a})}$$
$$\frac{(\frac{a}{a-1})}{d}=\frac{(1+\frac{1}{a-1})}{d}<\frac{(1+\epsilon)}{d}$$
\normalsize

So far we showed that if $T=d$, the total revenue is less than $(1+\epsilon)\frac{M}{d}$. In a more general case, we set $T=kd$ which means we have $k$ blocks of length $d$. On day $t$, $1\leq t\leq d$ in block $i$, $0\leq i< k$, the buyer wants to buy $a^{d-t}$ items, each with a value of $\frac{b\cdot(a-1)}{a^{d-t+1}-1}b^{k-i}$ (and additional items are valued at $0$). Therefore, the optimal prices in each block $i$ are also multiplied by $b^{k-i}$. Since in each block compared to its previous block, prices are lower, the buyer would not store any goods from the previous block. We define $M_i$ for $0\leq i<k$ as the maximum achievable revenue for block $i$ when there is no storage. As proved, in each block $i$, the maximum revenue is less $(1+\epsilon)\frac{M_i}{d}$. Therefore, in general, the maximum revenue is also less than $(1+\epsilon)\frac{M}{d}$.
\end{proof}

\subsection{$d$-Day Fractional Value}\label{fractional values}
Generalizing our shelf-life results from the previous Section, instead of assuming that after $d$ days the goods' values drops to $0$, we assume that after $d$ days the goods' value drops to a fraction $r$ ($0\leq r < 1$) of its value when bought. In other words, for $1\leq i\leq N$ and $1\leq t\leq T$, $v'(i, t, l) = v'(i, t, 1)$ for $l\leq d$ and $v'(i, t, l) = r \cdot v'(i, t, 1)$ for $l>d$.

The results are, to a large extent, a generalization of the $r=0$ case.


\begin{theorem}\label{lowerbound fractional values}
In $d$-day storage with fractional value model, the monopolist makes at least $\frac{1-r}{d}M$ amount of money.
\end{theorem}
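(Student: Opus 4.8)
The plan is to mimic the block-partition construction of Theorem~\ref{lowerbound d-day}, but to discount the optimal no-storage prices by the factor $(1-r)$ so as to neutralize the buyer's new option of buying far in advance and consuming at the degraded value $r$. Concretely, let $p_1,\dots,p_T$ be the seller's optimal prices for the $d=1$ case, let $q_j$ be the corresponding quantities, and recall $M=\sum_j p_j q_j$. For each residue $t$ with $0<t\le d$ I would define a pricing scheme that charges $(1-r)p_j$ on every day $j\equiv t \pmod d$ and an effectively infinite price on all other days (taking $T$ a multiple of $d$, the general case following exactly as in Theorem~\ref{lowerbound d-day}). Summing the seller's revenue over these $d$ schemes, it then suffices to show the total is at least $(1-r)M$, since one of the $d$ schemes must then yield at least $\frac{1-r}{d}M$.

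The core step is a per-day, per-unit argument showing that in scheme $t$ the buyer still purchases at least $q_j$ units \emph{fresh} on each active day $j$. Fix such a $j$ and one of the top $q_j$ marginal values, so $v:=v(i,j)\ge p_j$. Because the active days are spaced exactly $d$ apart, the only finitely priced day inside the full-value window $\{j-d+1,\dots,j\}$ is $j$ itself; hence the only way to obtain the full value $v$ on day $j$ is to buy fresh on day $j$ at price $(1-r)p_j$, giving utility $v-(1-r)p_j\ge r p_j\ge 0$. The sole alternative is to buy on an earlier active day $j'$ (at least $d$ days back) and consume at the degraded value $rv$. I would compare the two options directly: fresh is at least as good precisely when $(1-r)(v-p_j+p_{j'})+(j-j')c\ge 0$, and every summand here is nonnegative because $v\ge p_j$, $p_{j'}\ge 0$, and $c\ge 0$. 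Thus fresh purchase weakly dominates the degraded-storage option for each of these $q_j$ units, so the buyer consumes all of them on day $j$ and pays $(1-r)p_j$ apiece.

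Granting this, scheme $t$ earns at least $(1-r)\sum_{j\equiv t}p_j q_j$, and summing over $t=1,\dots,d$ gives at least $(1-r)\sum_j p_j q_j=(1-r)M$, so the best scheme achieves $\frac{1-r}{d}M$. This also dispatches arbitrary linear storage cost $c\ge 0$ for free: the cost $c$ enters the dominance inequality only through the nonnegative term $(j-j')c$, so a positive $c$ can only make fresh buying more attractive, and $c=0$ is the worst case.

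The step I expect to be the main obstacle is justifying that the buyer really does buy these units \emph{on day $j$} rather than shaving day-$j$ demand by substituting cheap, heavily discounted purchases from much earlier active days. This is exactly what the $(1-r)$ discount is engineered to prevent: without the discount the fresh-minus-degraded difference is $(1-r)v-p_j+p_{j'}+(j-j')c$, which a small earlier price $p_{j'}$ (together with small $c$) can drive negative, collapsing the revenue guarantee. I would therefore present the dominance inequality for a single unit and note that, because supply is unlimited and the marginal values $v(\cdot,j)$ are non-increasing and day-specific, the buyer's day-$j$ decision is separable across units, so controlling each of the top $q_j$ units individually controls the entire day.
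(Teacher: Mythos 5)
The paper never actually gives a proof of this theorem---it states only that the argument ``follows a similar structure'' to Theorem~\ref{lowerbound d-day} and omits it for space---so an exact comparison is impossible, but your proposal is a correct proof and is precisely the kind of argument the paper gestures at: the residue-class pricing of Theorem~\ref{lowerbound d-day} plus one new ingredient, the $(1-r)$ price discount. That ingredient is exactly right: it makes the fresh-versus-degraded utility gap equal to $(1-r)(v-p_j+p_{j'})+(j-j')c$, which is termwise nonnegative since $v\ge p_j$, $p_{j'}\ge 0$, $c\ge 0$, and the model's tie-breaking rule (buyers store as little as possible) converts this weak dominance into an actual fresh purchase, so each scheme $t$ earns at least $(1-r)\sum_{j\equiv t}p_jq_j$ as you claim. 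Your separability point also holds up: by the paper's opening Observation, fresher units are paired with higher marginal values (a single-crossing/rearrangement fact), so optimizing each consumption slot $(i,j)$ independently is globally consistent, and the extra purchase options your schemes create (e.g., within-window storage to non-active days) only add revenue, leaving the bound $\frac{1-r}{d}M$ intact.
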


\begin{theorem}\label{upperbound fractional values}
For all small $\epsilon>0$ there is a setting with $d$-day storage with fractional value model in which the monopolist's maximum revenue is less than $(\frac{1-r}{d}+\epsilon) M$.
\end{theorem}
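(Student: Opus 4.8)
The plan is to exhibit a tight instance by generalizing the construction of Theorem~\ref{upperbound d-day} (the $r=0$ case), combining two essentially independent ``loss'' mechanisms whose product is the target ratio $\frac{1-r}{d}$. As in that proof I would take $c=0$ (the footnote's scaling argument transfers verbatim), fix a large integer $a$ with $\frac{1}{a-1}$ small relative to $\epsilon$, and assemble the instance out of length-$d$ windows tiled along the horizon, so that it suffices to bound the revenue window by window and sum.

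Inside a single window I would reuse the geometric value gadget of Theorem~\ref{upperbound d-day}: on day $t$ of the window the buyer wants $a^{d-t}$ units of value $\frac{b(a-1)}{a^{d-t+1}-1}$ (times a window-dependent scale), with $b$ as defined there. Since any two days of a window are fewer than $d$ apart, storage inside a window is free and at full value, which---exactly as in the $r=0$ argument---forces the monopolist into an essentially uniform price over the window and costs a factor of $\frac1d$ relative to that window's no-storage optimum $M_{\text{window}}$. This recovers the $\frac1d$ part of the bound, and here I can quote the earlier calculation almost unchanged.

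To manufacture the extra $(1-r)$ factor I would let the value scale grow geometrically across windows and separate consecutive windows by $d-1$ idle days (on which the buyer wants nothing and prices are effectively $\infty$), so that every cross-window purchase is stored at least $d$ days and therefore retains only the fraction $r$ of its value. With a steep enough escalation the uniform price of the preceding window becomes negligible, so a buyer facing a given window can always secure $r\cdot v$ of any unit by carrying a (near-free) unit forward from the previous window. To keep the buyer purchasing fresh, the monopolist must then shade the window price down to roughly $(1-r)v$; combined with the within-window uniform-pricing loss this gives window revenue $\approx \frac{1-r}{d}M_{\text{window}}$, and summing over windows yields total revenue $<(\frac{1-r}{d}+\epsilon)M$.

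The main obstacle is the interaction between the two mechanisms. I must ensure that the full-value intra-window storage and the fraction-$r$ cross-window carry never combine into an unintended full-value cross-window store (this is precisely what the $d-1$ idle-day gaps guarantee) and, more delicately, verify that the carry constraint $p\le(1-r)v_{\text{item}}+p_{\text{prev}}$---which a priori binds differently for each of the several item-values present inside a window---still pins the monopolist's best response to an essentially uniform, $(1-r)$-shaded price, with no profitable deviation (selling storage, non-uniform schedules, or mixing windows). Getting the escalation rate, the gadget parameter $a$, and the number of windows simultaneously small enough that all the approximation errors fold into the single $\epsilon$ is the bookkeeping I expect to be fussy but routine once the qualitative picture above is locked down.
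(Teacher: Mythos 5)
The paper actually omits its proof of this theorem (saying only that it follows the structure of Theorem~\ref{upperbound d-day}), so your proposal can only be judged on its own merits --- and as specified it has a fatal gap. Your construction makes the value scale ``grow geometrically across windows'' steeply enough that ``the uniform price of the preceding window becomes negligible.'' But an upper bound must hold against \emph{every} pricing strategy, and steep escalation hands the monopolist a winning deviation: set the price effectively to $\infty$ on every day before the last window and run the plain Theorem~\ref{upperbound d-day} gadget inside the last window alone. With no cheap earlier units available, the buyer has nothing to carry, the $(1-r)$ shading evaporates, and the single-window analysis yields revenue at least $M_k/d$, where $M_k$ is the last window's no-storage optimum (each term $\frac{(a-1)a^{d-t}}{a^{d-t+1}-1}\leq 1$, so the gadget's uniform-price revenue $b$ is at least $M_{\mathrm{window}}/d$). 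Under steep escalation $M_k\geq(1-\delta)M$, so this deviation earns at least $\frac{(1-\delta)M}{d}>\left(\frac{1-r}{d}+\epsilon\right)M$ for any fixed $r>0$ once $\epsilon<r/d$ and $\delta$ is small --- directly contradicting the bound you are trying to prove.

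This is not ``fussy but routine bookkeeping''; it is a structural tension in your parameterization. To make skipping windows $1,\dots,j$ unprofitable, the loss $(1-r)\sum_{i\leq j}M_i/d$ must dominate the gain $r\,M_{j+1}/d$ from selling the first active window unshaded, which forces the inter-window scale ratio $\rho$ to stay below roughly $1/r$ --- precisely the opposite of ``steep.'' With bounded $\rho$ the previous window's price is \emph{not} negligible (it is a constant fraction of current values), so the near-free carry source must instead be the globally cheapest window (window~1), the telescoping of the caps $p\leq(1-r)v+p_{\mathrm{source}}$ must be redone with that source, and you must separately rule out the monopolist ``selling the carry'' (pricing some early day just under $r$ times a later value and capturing $r\cdot v$ per unit instead of $(1-r)\cdot v$), which for large $r$ can rival the shaded fresh price. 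Two smaller inaccuracies compound this: idle-day prices are the monopolist's to choose, so the $d-1$ idle days do not by themselves prevent full-value cross-window transfers (a unit bought on the last idle day reaches the first $d-1$ days of the next window at full value); and your own obstacle list names ``mixing windows'' as a deviation to verify, but under your stated steep-escalation regime that verification does not merely introduce error terms to fold into $\epsilon$ --- it fails outright, so the construction itself, not just its analysis, must change.
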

The proofs for these two theorems are somewhat more complicated than for the corresponding Theorems~\ref{lowerbound d-day} and \ref{upperbound d-day}, but as they follow a similar structure we omit the proofs due to space constraints.

\section{Finding Optimal Prices in $d$-Day Shelf-Life}
\label{sec:finding-prices}

The monopolist's goal is to maximize revenue, while consumers aim to maximize utilities. Consequently, when the prices are announced by the monopolist, the consumers seek the best strategy for them, which manifests itself in the number of units bought each day and the number of units consumed each day.

In the \textit{multi-buyer} setting, each consumer starts from day $T$ and works backward, trying to find the best day to purchase the unit that will be consumed on day $T$. Then the consumer proceeds to day $T-1$, repeating the process, and then moves on to day $T-2$ and onwards. We have $N$ consumers and each of them finds their best strategy in time $T^2$, so the running time of this algorithm is $T^2N$. This same algorithm also works in \textit{single-buyer} setting. In \textit{single-buyer} setting, on each day, for each marginal value greater than 0, the consumer finds the best day to buy a unit to maximize the utility separately. Therefore, the running time is $T^2$ for each unit, multiplied by the maximal number of units which is $N$. This algorithm works for any storage model ($d$-day storage, more than $d$-day storage with fractional values and multi-step value decrease model which we define later in this paper) and many storage cost functions beyond the linear we mainly address here.

On the other hand, from the monopolist's point of view, finding the best prices is not as easy as finding the best strategy for consumers. In this Section, we deal with finding the best strategy for the monopolist. We present an algorithm, exponential in $d$, which finds the best prices in the $d$-day storage model in both the \textit{single-buyer} and \textit{multi-buyer} settings.

The next theorem is similar to Theorem 3.2 in \cite{berbeglia:storable}. However, as in our model the best pricing sequence may require storage, the proof and the theorem are not the same.

\begin{theorem}\label{possible prices}
There exists an optimal pre-announced pricing sequence $p_1, p_2, \ldots, p_T$ such that for each $t$, we have $p_t=v(i, s, 1)+c(t-s)$ for some $1\leq s\leq T$ and some $0\leq i\leq N$.
\end{theorem}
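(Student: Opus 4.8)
The plan is to show that any optimal price vector can be transformed, price by price, into one of the claimed form without decreasing the seller's revenue. Consider an optimal pricing sequence $p_1,\ldots,p_T$ and fix a day $t$. The key observation is that the buyer's purchasing behavior is piecewise constant in $p_t$: as long as $p_t$ lies strictly between two consecutive ``threshold'' values, the set of units the buyer chooses to purchase on day $t$ (and the days on which those units are consumed) does not change. The relevant thresholds are exactly the values at which the buyer becomes indifferent between buying a unit on day $t$ and either (a) not buying that unit at all, or (b) buying it on some other day $s$ and storing it for consumption on the intended day. Because storage costs are linear at rate $c$, indifference of type (b) occurs precisely when $p_t$ equals $v(i,s,1)+c(t-s)$ for the appropriate $i$ and $s$, and indifference of type (a) is the special case $i=0$ (or the degenerate boundary).

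First I would argue that within each such open interval of $p_t$ values, revenue is monotone nondecreasing in $p_t$: holding all other prices fixed, raising $p_t$ inside an interval where the purchase pattern is unchanged strictly increases (or leaves unchanged) the revenue collected on day $t$ while affecting nothing else, since the buyer's decisions elsewhere depend only on the other prices and on which units are sourced from day $t$, and the latter is constant across the interval. Hence we may push $p_t$ up to the top of its interval, i.e.\ to the next threshold, without reducing total revenue. At that threshold, by the tie-breaking convention (the buyer prefers to store as little as possible, and more generally we may fix a consistent tie-break), the purchase pattern is still well defined and the revenue is at least what it was. This places $p_t$ at a value of the form $v(i,s,1)+c(t-s)$, as required.

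The second step is to handle all days simultaneously: I would apply the above adjustment one day at a time and check that normalizing $p_t$ does not spoil the form already achieved on previously adjusted days. The cleanest way is to observe that after fixing $p_t$ to its threshold, the thresholds governing the other days are themselves drawn from the same finite menu $\{v(i,s,1)+c(t'-s)\}$, so the process terminates with every price simultaneously on a menu value; alternatively one processes days in a fixed order and notes that raising $p_t$ to a threshold can only relax the constraints on other days (it makes day $t$ less attractive as a source, so units previously sourced from $t$ move elsewhere in a controlled way). I expect the main obstacle to be exactly this interaction across days: unlike the infinite-shelf-life argument in~\cite{berbeglia:storable}, here the optimal solution genuinely uses storage, so raising $p_t$ can reroute which day supplies a unit consumed on some later day, and I must verify that the rerouted solution still collects revenue at least equal to the original. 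The careful bookkeeping is to track, for each unit the buyer consumes, the day it is purchased as a function of the price vector, show this assignment changes only at the menu thresholds, and confirm revenue is nondecreasing as each $p_t$ is lifted to its nearest threshold from below.
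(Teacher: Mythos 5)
Your overall strategy---lift each price upward through regions where the buyer's behavior is constant, one day at a time, checking that revenue never decreases---is the same as the paper's. But there is a genuine gap at the step you lean on hardest: the identification of the thresholds. Indifference of your type (b), between buying a unit on day $t$ and buying it on some other day $s$ for the same consumption day, occurs at $p_t = p_s + c(t-s)$, a translate of the \emph{current price} on day $s$, not at $p_t = v(i,s,1)+c(t-s)$; it is only the type (a) buy/no-buy margin (a unit's utility hitting zero) that sits at value-translates $v(i,t',1)+c(t-t')$. Since the other prices in an arbitrary optimal sequence need not yet be on the menu, the purchase pattern can change at off-menu points; ``push $p_t$ to the next threshold'' therefore need not land on a menu value, and your termination claim (``the thresholds governing the other days are themselves drawn from the same finite menu'') is false in general---prices can stall at mutually referencing translates of one another that coincide with no value-translate.

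The paper closes exactly this hole with two devices your sketch lacks. First, it processes days in increasing order of $t$, so that when day $t$ is treated all earlier prices are already of menu form; then $p_s + c(t-s)$ for $s<t$ is itself a menu value, which is why jumping directly to the \emph{least menu value strictly above} $p_t$ never crosses an earlier-day translate (crossing one would reroute units to a cheaper earlier day and could lose revenue---this is where an unordered or simultaneous lifting breaks). Second, for translate thresholds of later days $s>t$ that this jump may cross or hit, the rerouted unit is bought later, hence stored fewer days; since the buyer weakly preferred day $t$ before the change, the total cost paid now is weakly higher while its storage component is smaller, so the price component---the seller's revenue---is weakly higher. That is precisely the ``careful bookkeeping'' you flagged but did not carry out, and it is also why the menu must include $s>t$ (i.e., $c(t-s)<0$), the substantive departure from Theorem 3.2 of Berbeglia et al.\ that limited shelf-life forces. (Two minor points: days with $q_t=0$ are handled by the sentinel entry $v(0,t,1)=L$, a ``no-sale'' price, rather than by your reading of $i=0$ as the buy/no-buy boundary; your within-interval monotonicity and tie-break analysis are fine as far as they go.)
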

\begin{proof}
The difference between this theorem and theorem 3.2 in \cite{berbeglia:storable} is that here, $1\leq s\leq T$ whereas in \cite{berbeglia:storable} $1\leq s\leq t$. Notice that here, $t-s$ can be negative, in which case the price for day $t$ is equal to $v(i, s, 1)$ minus storage cost. To prove this theorem, we set $v(0, t, 1)=L$ for all $t\in [T]$ where $L$ is a large number. We make this assumption because on days that nothing is sold, we set prices to $L$.

Let us assume $\{p_1, p_2, \ldots, p_T\}$ is the set of optimal prices and if there are several optimal sets of prices, choose one set arbitrarily. Take the smallest $t$ such that $p_t\neq v(i, s, 1)+(t-s)\cdot c$ for any $1\leq s\leq T$ and any $1\leq i\leq N$. If $q_t=0$, then set $p_t^\prime=v(0, t, 1)=L$. Clearly, the consumers still do not buy anything on day $t$ because the price on this day is a large number and consumers' behaviour on other days does not change either. Therefore, the monopolist did not lose any money by this change. If $q_t>0$, set $p_t^\prime=\min\{v(j,s, 1)+(t-s)\cdot c: 1\leq s\leq T; 1\leq j\leq N; v(j,s, 1)+(t-s)\cdot c>p_t\}$. If $q_t>0$, then $p'_t$ is well-defined because $v(j', s', 1)+(t-s')c\geq p_t$ for some $1\leq j'\leq N$ and $1\leq s'\leq T$ in order to have $q_t>0$. Now we are going to prove why this new set of prices is more profitable. On any day $t'<t$, the consumers will buy those units of goods they bought previously since the prices did not change on these days and future prices are either increasing or staying the same. On any day $t'>t$, again the consumers will buy those units of goods they bought previously since additional purchases on day $t$ for future consumption are not beneficial.

We only need to study what happens on day $t$. With our optimal prices, consumers bought $q_t$ units on day $t$. These $q_t$ units were consumed on different days, among all of these units, consider the one which had the least value for consumers. Let us say this value is the value of the $k^{th}$ unit on day $t^\prime$, $v(k, t^{\prime}, 1)$. So we have $p_t + c\cdot(t^\prime - t)\leq v(k, t^\prime, 1)$, but because $p_t\neq v(i, s, 1)+(t-s)\cdot c$ for any $1\leq s\leq T$ and any $1\leq i \leq N$, we have $p_t < v(k, t^\prime, 1)$. We know $v(k, t^\prime, 1)+c\cdot (t-t')$ belongs to $\{v(j,s, 1)+(t-s)\cdot c: 1\leq s\leq T; 0\leq j\leq N; v(j,s, 1)+(t-s)\cdot c>p_t\}$, so the new price $p_t^\prime$ which is the minimum value of the set is less than or equal to $v(k, t^\prime, 1)+c\cdot (t-t')$. Therefore, the consumer still affords to buy those units of goods, but they may prefer to buy them on other days rather than day $t$. These other days cannot be any day before day $t$ because we assumed that day $t$ was the first day that $p_t\neq v(i, s, 1)+(t-s)\cdot c$ for any $1\leq s\leq T$ and any $1\leq i\leq N$. Thus, prices on all previous days are in that form and $p_t^\prime$ is minimum value of the prices in that form. As a result, the consumers do not prefer to buy those units of goods on earlier days. It is possible that consumers buy those units on days after day $t$. In this case, the amount of money that consumers are paying is more than or equal to previous amount since previously, they preferred to buy on day $t$. Besides, they are storing for fewer days; therefore, the amount of money the monopolist makes is more than or equal to the previous amount. In conclusion, the monopolist does not lose any money by this change.
\end{proof}

In order to give a dynamic program to find the optimal prices in the $d$-day storage model, we need some definitions. First, we define $C_t$ which is the set of possible prices considering only future prices at time $t$ and $C_t^\prime$ which is the set of all possible prices. $C_t=\{v(j, s, 1)+(t-s)\cdot c|t\leq s \leq T, 0\leq j\leq N\}$ and $C^\prime_t=\{v(j, s, 1)+(t-s)\cdot c|1 \leq s \leq T, 0\leq j\leq N\}$.
Next, we define price $p^\prime_t$ to be the price the monopolist sells the goods that consumers are going to consume on day $t$; it can be sold on any day up to and including day $t$. Now we define cost $p''_t$ to be the total cost that consumers have paid for the goods to be consumed on day $t$, i.e., purchase price + cost of storage. 

We need to define additional functions: $p''_t(x_1, x_2, \ldots, x_d)$ takes prices $x_1, \ldots, x_d$ which are prices on days $t-d+1, t-d+2, \ldots, t$ and returns the lowest cost for the buyer (that is, including storage cost) to buy item for day $t$. We also use $argmin_t p''_t(x_1, \ldots, x_d)$ to return the index of the day with the lowest cost considering storage cost for day $t$. 
Note that $p'_t=x_{argmin_tp''_t(x_1, x_2, \ldots, x_d)}$. Finally, we define $q'_t(x_1, x_2, \ldots, x_d)$ which is the number of units of goods which consumers will buy to consume on day $t$, it can be purchased on any day up to and including day $t$, $q'_t(x_1, x_2, \ldots, x_d)=|\{j\geq 1: v_{j, t}\geq p''_t(x_1, x_2, \ldots, x_d)\}|$

\begin{theorem}
The dynamic program (algorithm \ref{d-day algorithm}) finds the optimal prices.
\end{theorem}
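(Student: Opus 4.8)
The plan is to show that, once prices are restricted to the finite candidate sets guaranteed by Theorem~\ref{possible prices}, the seller's revenue decomposes into a sum of per-consumption-day terms, each depending only on a sliding window of $d$ consecutive prices; a forward dynamic program whose state records the most recent $d-1$ prices then computes the optimum by the standard recursion.

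First I would fix an arbitrary price vector with $p_t \in C'_t$ for every $t$ and analyze the buyer's best response. Because supply is unlimited and (in the single-buyer case) marginal values are treated unit-by-unit via the Observation, the buyer's decision about the units consumed on a given day $t$ is independent of the decisions for every other consumption day: buying extra units for day $t$ never constrains what can be bought for day $t'$. In the pure $d$-day shelf-life model every unit bought within the window $\{t-d+1, \ldots, t\}$ retains full value on day $t$, while a unit bought earlier is worthless. Hence the cheapest way to serve day $t$ costs exactly $p''_t(p_{t-d+1}, \ldots, p_t) = \min_{1\le k\le d}\{p_{t-d+k} + c(d-k)\}$, the buyer purchases precisely the units with $v(j,t,1) \ge p''_t$ (there are $q'_t$ of them), and every such unit is bought on the $\mathrm{argmin}$ day at price $p'_t$. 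The revenue attributable to day-$t$ consumption is therefore exactly $p'_t \cdot q'_t$, and total revenue equals $\sum_{t=1}^T p'_t(p_{t-d+1},\ldots,p_t)\,q'_t(p_{t-d+1},\ldots,p_t)$.

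The key structural fact is that each summand depends only on the $d$-price window ending at day $t$, so the objective is a separable sum over overlapping windows — the canonical setting for a dynamic program whose state is the tuple of the last $d-1$ prices. I would then argue correctness by the usual forward induction: the entry indexed by a window $(p_{t-d+2}, \ldots, p_t)$ stores the maximum revenue obtainable on days $1, \ldots, t$ consistent with that window; the transition appends a new price $p_{t+1}\in C'_{t+1}$, adds the now-fully-determined term $p'_{t+1}\,q'_{t+1}$, and shifts the window. Since Theorem~\ref{possible prices} guarantees that some optimal price vector lies in $\prod_t C'_t$, and the dynamic program ranges over exactly these combinations while accumulating precisely the decomposed revenue, its maximum equals the optimal seller revenue. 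The state space has size $O((NT)^{d-1})$, giving an algorithm polynomial in $N$ and $T$ but exponential in $d$, as claimed.

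I expect the main obstacle to be rigorously justifying the decomposition rather than the recursion itself. One must verify that the buyer's optimal response really does serve each consumption day at the minimum cost $p''_t$ \emph{independently} of other days — leaning on unlimited supply and the marginal-unit reduction of the Observation — and that the tie-breaking convention (store as little as possible) makes the chosen purchase day, hence $p'_t$, well defined, so that the per-window revenue function is single-valued. Handling the boundary windows for $t < d$, where fewer than $d$ prior days exist, and confirming that no revenue is double-counted or omitted across consumption days are the remaining technical points; once the additive, windowed revenue formula is established, optimality of the dynamic program follows by the standard argument.
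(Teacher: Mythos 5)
Your proposal is correct and takes essentially the same approach as the paper: both rely on Theorem~\ref{possible prices} to restrict prices to the finite candidate sets $C'_t$, and both exploit the decomposition of revenue into per-consumption-day terms $p'_t \cdot q'_t$ that depend only on the window of the last $d$ prices, giving a dynamic program whose state is the previous $d-1$ prices. The only cosmetic differences are that you run the recursion forward while Algorithm~\ref{d-day algorithm} inducts backward from day $T$, and that the paper prunes the candidate set for the newly chosen price to $C_t \cup \{x_i+(d-i)c\}$ rather than all of $C'_t$ --- an efficiency detail that does not alter correctness or the $O((NT)^d dT)$ bound.
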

\begin{algorithm}
\begin{flushleft}
\small
\caption{Optimal Prices in $d$-day Model}\label{d-day algorithm}
\begin{algorithmic}[1]

\State $R(T+1, x_1, x_2, \ldots, x_{d-1}) \gets 0$ for all $x_1, \ldots, x_{d-1}\in (\bigcup\limits_{t=1}^{T}C'_t)^{d-1}$
\For{$t=T\rightarrow d$}
	\ForAll {$x_1, x_2, \ldots, x_{d-1}\in C^\prime_{t-d+1}\times C^\prime_{t-d+2}\times \ldots, C^\prime_{t-1}$}
		\State{$R(t, x_1, \ldots, x_{d-1}) \gets \max\{q'_t(x_1, \ldots, x_{d-1}, x_d)p_t^\prime(x_1, \ldots, x_{d-1}, x_d)+R(t+1, x_2, x_3, \ldots, x_d) : x_d\in C_t\cup\{x_i+(d-i)c, 1\leq i\leq d\}\}$}
		\State{$S(t, x_1, x_2, \ldots, x_{d-1})\gets arg\max_{x_d}\{q'_t(x_1, \ldots, x_{d-1}, x_d)p_t^\prime(x_1, \ldots, x_{d-1}, x_d)+R(t+1, x_2, x_3, \ldots, x_d) : x_d\in C_t\cup\{x_i+(d-i)c, 1\leq i\leq d\}\}$}
	\EndFor
\EndFor
\State{$x^\star_1, x^\star_2, \ldots, x^\star_{d-1}\gets arg\max_{x_1, \ldots, x_{d-1}}\{R(d, x_1, x_2, \ldots, x_{d-1}): x_1, x_2, \ldots, x_{d-1} \in C^\prime_1\times C^\prime_2\times \ldots \times C^\prime_{d-1}\}$}	
\For{$t=d\rightarrow T$}
	\State{$x^\star_t\gets S(t, x^\star_{t-d+1}, x^\star_{t-d+2}, \ldots, x^\star_{t-1})$}
\EndFor

\Return $x^\star_1, x^\star_2, \ldots, x^\star_T$
\end{algorithmic}
\end{flushleft}
\end{algorithm}
\normalsize

\begin{proof}[Sketch of proof]
$R(t, x_1, \ldots, x_{d-1})$ computes the optimal revenue that the monopolist can earn from day $t$ to day $T$ given that $x_1, \ldots, x_{d-1}$ are prices on $d-1$ previous days. This is done by backwards induction. First, we have $R(T+1, x_1, x_2, \ldots, x_{d-1})$ to zero for any $x_1, \ldots, x_{d-1}$ and second, in the for loop when $t=T$, it finds $R(T, x_1, \ldots, x_{d-1})$ for any given $x_1, \ldots, x_{d-1}$ by going through all possible prices using $x_d$ variable for day $T$. The algorithm checks how many units consumer will buy to consume on day $T$ with given prices $x_1, \ldots, x_{d}$ by calculating $q'_t(x_1, \ldots, x_{d-1}, x_d)$ and then computes how much the monopolist will earn per unit by calculating $p_t^\prime(x_1, \ldots, x_{d-1}, x_d)$; thus, taking the $\max$ of their multiplication is the maximum total amount of money that the monopolist makes for day $T$ consumption. We keep the knowledge of the price we chose using $S$.

For the induction step, for any day $k$ and any possible $x_1, \ldots, x_{d-1}$, we assume that we computed $R(t, x_1, \ldots, x_{d-1})$ for all days $k<t\leq T$, then given $x_1, \ldots, x_{d-1}$, we compute the obtainable revenue from day $t$ to $T$ for all possible $x_d$. Then we find the maximum of these values as $R(k, x_1, \ldots, x_{d-1})$.

Finally, we compute the best prices for first $d-1$ days, by checking all possible prices for those days and computing the maximum revenue. Therefore, by using first $d-1$ prices and $S(t, x_1, x_2, \ldots, x_{d-1})$ we can find the optimal prices for all days.
\end{proof}

\begin{observation}
The running time of the dynamic program is $O((N T)^d d T)$.
\end{observation}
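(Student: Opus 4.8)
The plan is to bound the total running time as the product of three factors: the number of distinct states of the dynamic program, the number of transitions examined from each state, and the work performed per transition. I would first bound the sizes of the candidate price sets, since everything else is indexed by them. For each $t$, the set $C^\prime_t=\{v(j,s,1)+(t-s)c \mid 1\le s\le T,\ 0\le j\le N\}$ is determined by a choice of $s$ among $T$ days and a choice of $j$ among $N+1$ values, so $|C^\prime_t|=O(NT)$, and likewise $|C_t|=O(NT)$. By Theorem~\ref{possible prices} these are the only prices that need to be considered, so restricting each coordinate $x_i$ to the corresponding $C^\prime$ set loses no optimal solution.

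Next I would count the states. A state is a tuple $(t,x_1,\ldots,x_{d-1})$ with $t\in\{d,\ldots,T\}$ and each $x_i$ ranging over a set of size $O(NT)$; hence for each fixed $t$ there are $O((NT)^{d-1})$ tuples, and summing over the $O(T)$ values of $t$ gives $O\bigl(T\,(NT)^{d-1}\bigr)$ states in total. From each state the algorithm maximises over $x_d\in C_t\cup\{x_i+(d-i)c : 1\le i\le d\}$, a set of size $O(NT+d)=O(NT)$, so there are $O(NT)$ transitions per state.

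Finally I would bound the work per transition. For a fixed tuple $(x_1,\ldots,x_d)$, computing $p''_t$ amounts to minimising $x_i+(d-i)c$ over the $d$ relevant days, which is $O(d)$; this immediately yields $p'_t$ via the recorded argmin, and the lookup of $R(t+1,x_2,\ldots,x_d)$ is $O(1)$ with suitable indexing. The count $q'_t=|\{j\ge 1: v_{j,t}\ge p''_t\}|$ can be obtained in $O(\log N)$ by binary search, since $v(i,t)$ is non-increasing in $i$, which is dominated by the $O(d)$ term (or, to be safe, one precomputes these counts for the $O(NT)$ thresholds on each day in $O(NT^2\log(NT))$ total time, which is lower order once $d\ge 2$). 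Hence each transition costs $O(d)$, and multiplying the three factors gives $O\bigl(T\,(NT)^{d-1}\bigr)\cdot O(NT)\cdot O(d)=O\bigl((NT)^d\,d\,T\bigr)$. The final reconstruction loop runs over $O(T)$ days with $O(1)$ table lookups each and is therefore negligible.

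The main obstacle I anticipate is careful bookkeeping of the candidate sets rather than any deep argument: one must confirm that enlarging the transition alphabet by the $\{x_i+(d-i)c\}$ terms keeps it within $O(NT)$, that the per-transition evaluation of $p''_t$ is genuinely $O(d)$ and not larger, and that the count $q'_t$ does not secretly dominate. Placing the $d$ factor exactly where it belongs---as the cost of evaluating the $d$-day minimum in $p''_t$---is the one spot where loose accounting would perturb the stated polynomial factors.
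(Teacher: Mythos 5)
Your proposal is correct and is precisely the accounting the paper intends: the paper states this observation without proof, and your decomposition into $O(T\,(NT)^{d-1})$ states (via $|C'_t|=O(NT)$, justified by Theorem~\ref{possible prices}), $O(NT)$ transitions per state, and $O(d)$ work per transition to evaluate $p''_t$ yields exactly the claimed $O((NT)^d\,d\,T)$. Your extra care with $q'_t$ (binary search or precomputed threshold counts) is a sound way to confirm it does not dominate, and nothing in your argument diverges from the straightforward count the observation rests on.
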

\commentout{
\section{Global Budget}
\label{sec:budget}
In \textit{global budget} setting, for the \textit{multi-buyer} case, each consumer $i$ for $1\leq i\leq N$ has $b_i$ amount of money to spend for the whole duration (from day 1 to day $T$). Clearly, consumers cannot spend more than their budget both for buying and storing. In the \textit{single-buyer} case, the consumer has a budget $b$.


\begin{theorem}\label{nphard}
For an arbitrary setting of the consumer's values, budget and the seller's prices, finding the consumer's best strategy is NP-hard.
\end{theorem}

\begin{proof}
Assume $d=T$. To prove that it is NP-hard, we reduce the $\{0, 1\}$ knapsack problem to this one. Let us assume we are given a $\{0, 1\}$ knapsack problem: a set of items sorted according to their weights in non-increasing order represented by $(w_t, v_t)$ for $1\leq t \leq T$ where $w_t$ is item $t$'s weight and $v_t$ is its value. We are also given weight $W$ which the total weight of chosen items must not exceed. Our goal is choosing some items in order to maximise the total value. Now, we want to convert this problem into finding consumer's best strategy in \textit{global budget} setting. The consumer's valuation and price for day $t$ between 1 and $T$ are $v_t+w_t$ and $w_t$, respectively. The global budget is $W$. Since prices are decreasing the consumer does not save at all. If the consumer can find his best strategy in polynomial time, the $\{0, 1\}$ knapsack problem is solved.
\end{proof}

\begin{conjecture}
Computing the consumer's best strategy when the monopolist has the optimal pricing is NP-hard.
\end{conjecture}

\subsection{Finding Optimal Prices in the Global Budget Setting}
In the next theorem, we use algorithm 1 in \cite{berbeglia:storable} to show how monopolist finds the optimal prices in the \textit{single-buyer} case with the \textit{global budget} when values do not decrease as goods are stored (that is, the shelf-life, $d$, is larger than the examined period, $T$).

\begin{theorem}\label{limitedbudget:single-buyer}
In the \textit{single-buyer} setting with a \textit{global budget} if values do not decrease when goods are stored for $T$ time, the monopolist can find the best prices in time $O(T\cdot D^2)$ where $D$ is the total number of items in demand ($D\leq N\cdot T$).
\end{theorem}
\begin{proof}
Theorem 3.1 in \cite{berbeglia:storable} states that for in their setting (i.e., there is no $d$) there are prices such that optimal monopolist profit is achieved without any consumer using storage. This the theorem applies in this case, that paper's algorithm 1 -- finding optimal prices -- still holds. First, we run that algorithm and we calculate the monopolist's revenue ($\sum_{i=1}^Tq_ip_i$). If this revenue is less than $b$ (consumer's budget), the optimal prices are the same as the ones in non-budgeted case.

Otherwise, we find the first day that the monopolist's revenue from day 1 up to that day exceeds or is equal to $b$, and term that day $k$. Therefore, we have $\sum_{i=1}^kq_ip_i\geq b$ and $\sum_{i=1}^{k-1}q_ip_i < b$. We set the prices on days 1 to $k-1$ equal to $p_1$ to $p_{k-1}$, respectively, and for the remaining days, we will find a fixed price $p^*$ so that the monopolist's revenue for days $k$ to $T$ will add up to $b-\sum_{i=1}^{k-1}q_ip_i$. We define this value $b'=b-\sum_{i=1}^{k-1}q_ip_i$. Now $p^*$ must be low enough that the consumer prefers not to store. To find $p^*$, first, we prove that $p^*$ exists and then we give an algorithm to find $p^*$.

We define $f(p)$ as the total number of units that the consumer would buy from day $k$ to day $T$ if prices on day $k$ to day $T$ were the fixed price $p$. So we have $f(p) = \sum_{i=k}^{T}q_i$ if $p_i = p$ for $k\leq i \leq T$. As $p$ increases, $f(p)$ decreases. Our goal is proving that there exists $p^*$ such that $f(p^*)p^*=b'$ and then finding this $p^*$. Previously, we had $\sum_{i=1}^kq_ip_i\geq b$; thus, $f(p_k)p_k\geq b'$. We define a new function $g(p)=f(p)p$. We know $g(0) = 0$ and $g(p_k) \geq b'$ and so $g:[0, p_k]\rightarrow [0, g(p_k)]$.

Since $g(p)=f(p)p$ where $f(p)$ is a decreasing step function and $p$ is increasing, $g$ is piecewise-linear. Moreover, each piece is monotonically increasing (as long as the set of days in which the item is purchased does not change), but at a discontinuity point $x$, $\lim_{p<x}g(p)>\lim_{p>x}g(p)$. Figure \ref{plotg} is an example of function $g$.

\begin{figure}
\begin{center}
\includegraphics[width=0.55\columnwidth]{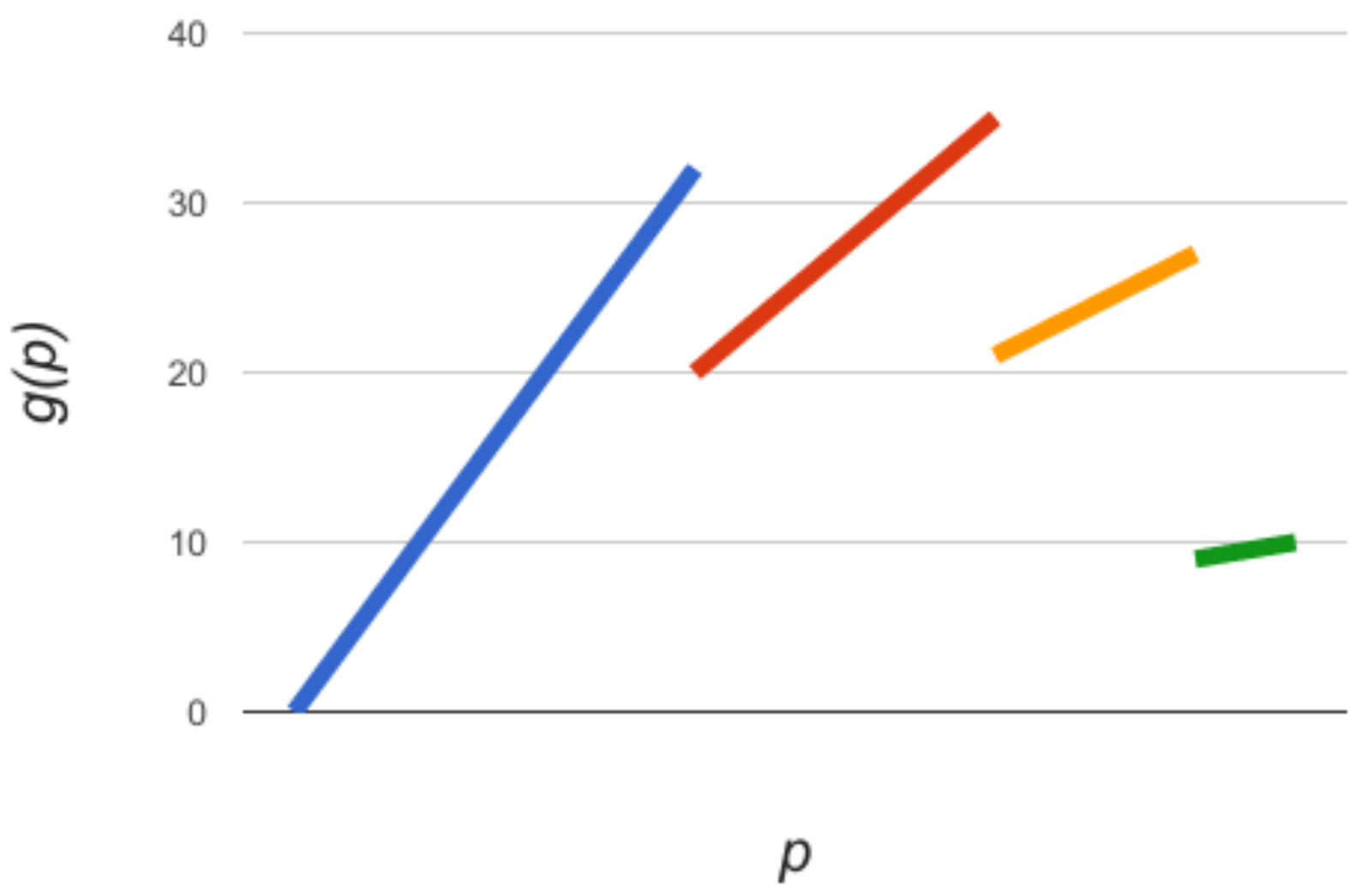}
\end{center}
\caption{A schematic example of $g$. Each interval is in a different color. The value of the right endpoint of each interval is less than the value of the left endpoint of the previous interval.}
\label{plotg}
\end{figure}

Earlier, we saw $g(p_k) \geq b'$ and $g(0) = 0 < b'$. For an interval $I=(\alpha,\beta]$, if $\lim_{p>\alpha}g(p)>b'$, then for any $p\in I$, $g(p)>b'$. If $g(\beta)<b'$, the next interval will also begin with a value less than $b'$. Therefore, there is at least one interval $I$ such that $g(\beta)>b'$ and $\lim_{p>\alpha}g(p)<b'$. Because the function on each interval is linear, there exists $p^*$ such that $f(p^*)p^*=b'$.
}
\begin{figure*}[h]
 \begin{multicols}{2}

\begin{center}
\includegraphics[width=\textwidth]{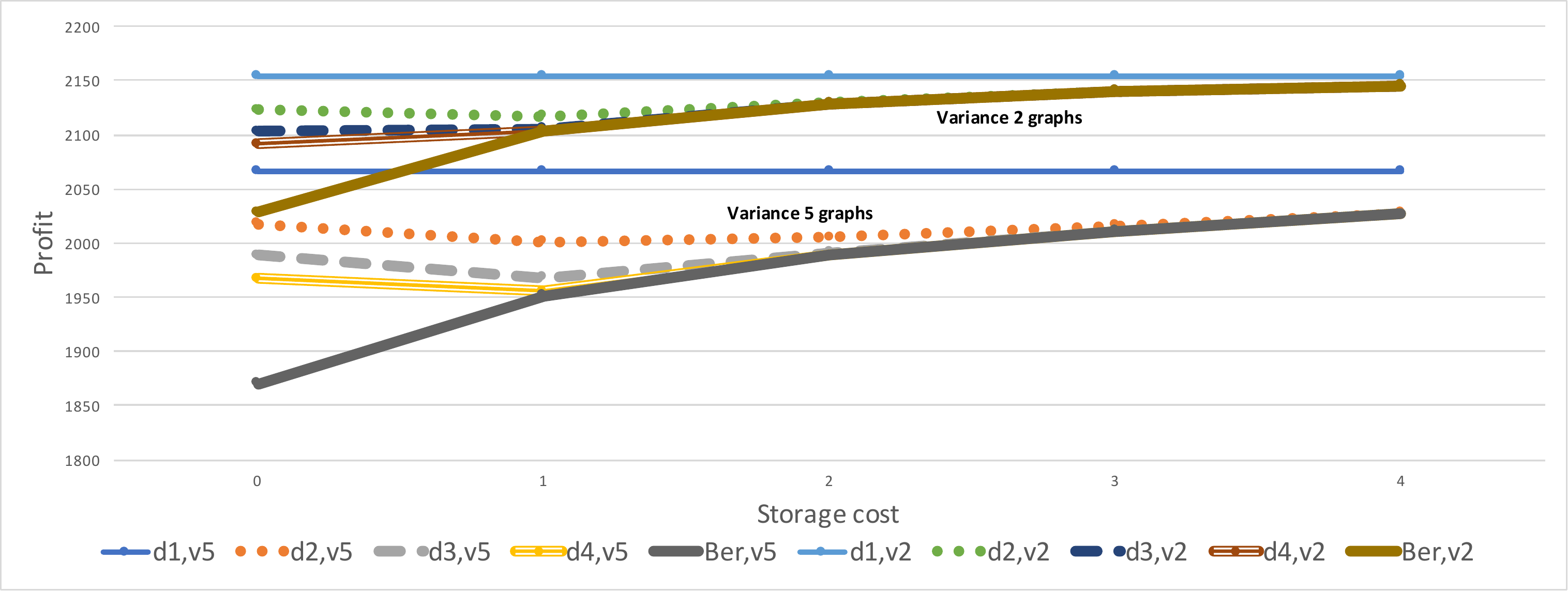}
\end{center}
\end{multicols}

\caption{Effects on profit when changing storage cost, with different graphs according to shelf-life duration and variance of distribution from which utility valuations were taken (``Ber''' indicates \cite{berbeglia:storable}'s model of infinite shelf-life).}
\label{profitCost}
\end{figure*}

\commentout{
We proved that $p^*$ exists, but how can we find it? Because Theorem 3.1 in \cite{berbeglia:storable} applies here, all possible prices are in the following form: $v(j, s, 1)+(t-s)c$ such that $t\leq T, s \leq t, 0\leq j\leq N$ (otherwise, it would be beneficial for a consume to use storage). Since earlier we saw $p^* \leq p_k$, we set $P_t=\{v(j, s, 1)+(t-s)c|s\leq t, 0\leq j\leq N\, v(j, s, 1)+(t-s)c \leq p_k\}$ and $P=\bigcup_{t=k}^TP_t$. All possible prices that there would be a jump in function $g$ at those prices are in $P$. Therefore, we can sort all prices in $P$ in descending order and check the value of $g$ for these values. We then find the first interval with the right properties (starting below $b'$ and ending above it), in which we find $p^*$.

Algorithm 1 in \cite{berbeglia:storable} takes $O(T\cdot D^2)$ to run and $p^*$ can be found in less than $O(T\cdot D^2)$. Thus, the total runtime is $O(T\cdot D^2)$.
\end{proof}

%
%
}

\section{Empirical Examination of $d$-day Shelf-Life}
\label{sec:experiments}

We designed a set of simulations so as to more carefully examine the connections between prices, buyers' utilities, storage costs and shelf life. As we wish to understand these relations in realistic settings, we chose buyer valuation functions corresponding to a consumer product. We did this by first choosing for each buyer $i$ their ``base value'' $v_{i}$ for one unit of an item (e.g., how much does one like apples), using a normal distribution with a fairly large variance (we used one with mean 30 and variance 10). However, if each buyer's valuation was fixed the pricing problem would simply be a matter of finding the optimal price for a single day. Hence, as in real life, one's daily valuation is close to, but not exactly, their ``base value'' but not exactly it (e.g., some days one can be busier, without time for a snack). Therefore, we specify a buyers' valuation distribution as a normal distribution with its mean being its base value, $v_{i}$, and its variance being either 5 or 2 (we chose to see the different behavior when valuations change more or less significantly each day). 
We ran this experiment with $N = 5$ buyers and time-horizon $T=20$.

What is the impact of rising storage costs on prices, profit and utility? The seller can respond to rising storage cost by increasing or decreasing prices (or not respond at all) so as to obtain optimal revenue. Increasing prices can benefit profit in an obvious way if one does not drive out too many buyers on any given day. Decreasing prices can result in more profit by allowing more buyers to make a purchase if the increased participation offsets the lower prices. We recall the critical observation in Berbeglia et al.~ \cite{berbeglia:storable} that there is no need for storage with optimal prices when there is unlimited shelf-life. Hence, it follows that {\bf increasing storage costs cannot decrease profit in the unlimited shelf-life model}, since any buyer who did not store before (even at cost 0) will surely not want to store at a higher storage cost. This allows the seller to increase or decrease prices so as to achieve optimal revenue by determining the tradeoff between the increase in price per item sold to buyers who continue to buy and the loss due to buyers who will not buy on a given day. 

However, as shown by Theorem~\ref{need storage} and illustrated in Table~\ref{table:counter example}, {\bf in the limited shelf-life model, storage is sometimes necessary, and profits can actually decrease when storage costs rise}. To what extent does this 
happen in the reasonably realistic scenario given by our distribution on buyer values? Clearly, the smaller the variation in each buyers valuation, the closer we are to 
simple identical pricing for every day without any anomalies and conversely, we may expect that pricing becomes more subtle as the variation increases. Similarly, the longer the shelf-life duration $d$, the closer we are the unlimited shelf-life model. 

 \begin{figure*}[h]
 \begin{multicols}{2}
\begin{center}
\includegraphics[width=\textwidth]{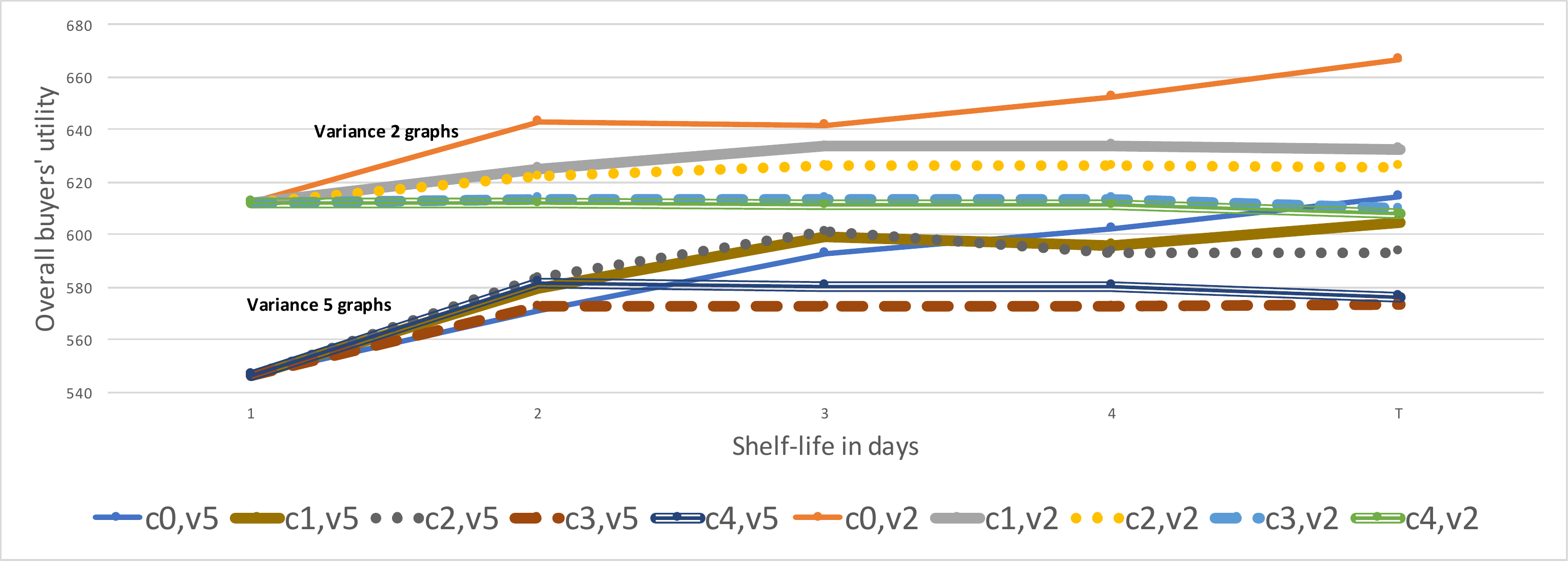}
\end{center}
\end{multicols}
\caption{Effects on overall buyers' utility when shelf-life duration increases (until it is $T$, which are the results for \cite{berbeglia:storable}), with different storage costs and variance of distribution from which utility valuations were taken.}
\label{SWduration}
\end{figure*}

Figure \ref{profitCost} demonstrates the impact of rising storage costs on the sellers optimal revenue. Note the various graphs with variance 2 are always above their equivalent with variance 5. We observe that the two curves (for variance 2 and 5) for unlimited shelf-life are indeed monotonically increasing with cost. In contrast, even for small variance, the curves for limited duration are not monotonic and that this phenomena is more accentuated with higher variance although the curves do become monotonic as the duration increases. 

Figure~\ref{SWCost} considers the overall utility of the buyers (i.e. 
the sum of utilities for each item unit sold) as a function of storage cost. We note that in the effects of rising costs described above, only the one lowering prices has the possibility of increasing the overall utility while increasing profitability. In order for this to happen, there needs to be more than a small difference between the valuations in different days, and the longer the shelf-life, the larger is the seller's concern that one could buy the item when it's cheap and save it. Hence, the longer the shelf-life, the storage cost needs to be higher, so it would not be beneficial for a buyer to buy and store. For the higher variance this subtle interplay is apparent in the figure. Considering both Figures~\ref{profitCost} and \ref{SWCost}, we can see that for duration $d=4$, increasing the storage cost from $0$ to $1$ illustrates that both the profit and overall utility can decrease. Note also that higher shelf life duration will allow the buyer to store more often but the cost of storage tends to lower the overall utility. 

 \begin{figure}
\begin{center}
\includegraphics[width=\columnwidth]{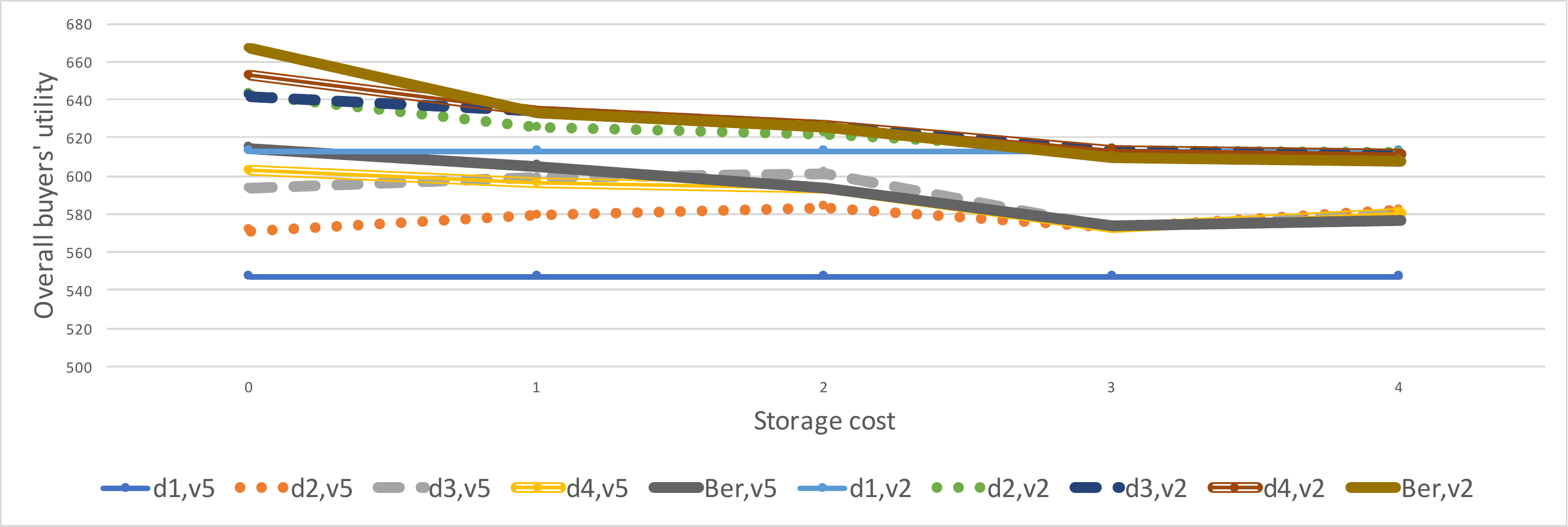}
\end{center}
\caption{Effects on overall buyers' utility when changing storage cost, with different graphs according to shelf-life duration and variance of distribution from which utility valuations were taken (``Ber''' indicates \cite{berbeglia:storable}'s model of infinite shelf-life).}
\label{SWCost}
\end{figure}

The differing variation of the buyers' utility from day to day has, as is to be expected, a significant effect on the observed phenomenon. When the variation is smaller, it can be approximated by the case where the valuations are the same, which are far easier to analyze (since prices stay fixed throughout). Indeed, as can be seen in Figure~\ref{SWduration}, for the low variance case, a higher storage cost goes hand in hand with lower buyers' utility, an effect which becomes more accentuated with the shelf-life duration. However, when the variance is higher, this clean and orderly structure disappears. As we have observed before, unlike \cite{berbeglia:storable}'s model, higher storage costs are not necessarily linked with lower utility, and this effect is clearer when the shelf-life is shorter; the longer it gets, the closer it resembles \cite{berbeglia:storable}'s model, in which the shelf-life is $T$. In particular for shorter shelf-life, the interaction between prices and storage costs is quite intricate, resulting in increased profits for the seller, for whom the storage costs are a guarantee that a lower price on a certain day would not ``propagate'' to future days.

\section{Conclusion and Future Directions}
We first studied the $d$-day shelf-life pricing problem (when items perish in $d$ days), and then we extended the model so that an item retains a fraction of its value after $d$-days. We proved tight bounds on the seller's profits in these models, which show the profit decreases linearly as the shelf life grows. For the $d$-day shelf-life model we gave an algorithm (polynomial time in $N$ and $T$ but exponential in the shelf life $d$) to calculate optimal prices. One immediate question is whether or not this exponential 
dependence on the shelf -life $d$ is necessary. While for many perishable food items one would expect $d$ to be relatively small (i.e., relative to the overall time frame $T$ for which decisions are being made), but in other applications, $d$ might be quite large. 

As noted in the introduction, optimal pricing calculations are, de-facto, finding an allocation mechanism that can be applied in various settings of limited resources, and our time-sensitive setting has applications beyond rotten eggs and out-of-fashion clothing items. 
For example, cloud services -- the usage of which is growing significantly -- are commonly priced so that users pay for a set of resources they can only use for a limited time, which is exactly a limited shelf-life product.


There are many ways to continue and expand this line of research. Our setting did not include the presence of consumer budgets, that is, an overall limit on the expenditure buyers can afford throughout the whole period $T$. This is an issue not only in our setting, but also in Berbeglia et al.~\cite{berbeglia:storable}. While we have some preliminary results in this regard 
(namely a quadratic programming algorithm), 
the presence of budgets leads to a substantially more complex pricing problem, as was shown in different settings to which budgets were added (e.g.,~\cite{BLS16}). Naturally, budget considerations will come into play even more significantly when extending the model to consider prices for multiple distinct items with one or multiple sellers. If there is no budget then item pricing may be considered as separate sales; but with budgets, to what extent would item prices be related? 


A further extension of the $d$-day fractional model is to allow an item's value to decrease gradually, so that after $d_{k}$ days (for $k = 1, \ldots, t$), the value of the item decreases to a fraction $r_{k}$ of its initial value 
until it eventually (after some $d_{t+1}$ days) loses all value. An obvious (but mistaken) approach to this would be to assume Theorems~\ref{lowerbound fractional values} and~\ref{upperbound fractional values} can easily be nested. This does not work, since buyers can always buy a completely new item, and while we hypothesize the outcome will be a linear relation between $r_{k}$ and the profits, it requires a different approach than the one used here. Another fundamental change is moving to an adaptive pricing model; namely, instead of pre-announcing prices, how will the market behave when the seller changes prices dynamically as discussed in Berbeglia et al.~\cite{berbeglia:storable}.

Finally, an additional topic of consideration -- not only for our model, but for~\cite{berbeglia:storable} and others as well -- is one of information. Our scenario assumes a full information setting where the seller knows the valuations 
of buyers for each day. What should a seller do in the Bayesian setting where the daily valuations are drawn from a known distribution? Taking the expected valuation for each day is, of course, not a valid solution (the pricing for an agent that has a value of $2$ or $0$ is very different than for agent with value $1$). This problem can be seen as a type of Bayesian Stackelberg game (with each set of valuations considered as a type of ``follower''). However, in general, finding the optimal strategy in such games is known to be be NP-hard~\cite{CS06}. But our particular structure (with a known distribution for each day's value), may allow for better results.

\subsection*{Acknowledgements}
This work was partially supported by NSERC Discovery Grant 482671 and NSERC Accelerator Grant Fund 503949.

\bibliographystyle{ACM-Reference-Format} 
\bibliography{bib} 

\end{document}